\newtheorem{theorem}{Theorem}[section]
\newtheorem{proposition}[theorem]{Proposition}
\theoremstyle{definition}
\theoremstyle{remark}
\newcommand{\R}{\mathbb{R}}
\newcommand{\ba}{\mathbf{a}}
\newcommand{\bb}{\mathbf{b}}
\newcommand{\bd}{\mathbf{d}}
\newcommand{\be}{\mathbf{e}}
\newcommand{\bu}{\mathbf{u}}
\newcommand{\bv}{\mathbf{v}}
\newcommand{\bx}{\mathbf{x}}
\newcommand{\by}{\mathbf{y}}
\newcommand{\bz}{\mathbf{z}}
\newcommand{\grp}{\mathfrak{G}}
\title{Group Invariant Dictionary Learning}
\author{Yong Sheng Soh}
\date{Department of Mathematics, National University of Singapore\\ 10 Lower Kent Ridge Road, Singapore 119076 \\[2ex]%
Institute of High Performance Computing\\ 1 Fusionopolis Way, \# 16-16 Connexis, Singapore 138632 \\[2ex]%
15 July, 2020, revised 5 June, 2021
}
\begin{document}

\maketitle

\begin{abstract}
The dictionary learning problem concerns the task of representing data as sparse linear sums drawn from a smaller collection of basic building blocks.  In application domains where such techniques are deployed, we frequently encounter datasets where some form of symmetry or invariance is present.  Motivated by this observation, we develop a framework for learning dictionaries for data under the constraint that the collection of basic building blocks remains invariant under such symmetries.  Our procedure for learning such dictionaries relies on representing the symmetry as the action of a matrix group acting on the data, and subsequently introducing a convex penalty function so as to induce sparsity with respect to the collection of matrix group elements.  Our framework specializes to the convolutional dictionary learning problem when we consider integer shifts.  Using properties of positive semidefinite Hermitian Toeplitz matrices, we develop an extension that learns dictionaries that are invariant under continuous shifts.  Our numerical experiments on synthetic data and ECG data show that the incorporation of such symmetries as priors are most valuable when the dataset has few data-points, or when the full range of symmetries is inadequately expressed in the dataset.
\end{abstract}

\emph{Keywords}: sparse coding, equivariance, atomic norms, circulant matrices, orbitopes.
\section{Introduction}

The dictionary learning problem (also known as sparse coding in the literature) concerns the task of representing data as sparse linear sums of a smaller collection of basic building blocks:  Given a dataset $\{ \by^{(i)} \}_{i=1}^{n} \subset \R^d$, compute a collection vectors $\{ \ba_j \}_{j=1}^{q} \subset \R^{d}$ so that
\begin{equation} \label{eq:dl}
\by^{(i)} \approx \sum_{j=1}^{q} x^{(i)}_j \ba_j, ~ \text{s.t.}~ \bx^{(i)} = (x_1^{(i)},\ldots,x_q^{(i)})^{\intercal} \text{ is sparse }\forall i.
\end{equation}

The dictionary learning task is motivated by the prevalence of sparse representations in a wide range of data processing applications.  Sparse representations form the basis of numerous procedures for storage, compression, as well as communication of data.  In addition, numerous computational procedures for downstream processing tasks such as denoising and the imputation of missing entries heavily rely on data admitting sparse representations for its success.

A fundamental ingredient for applying these methods is that we identify a suitable transformation -- frequently referred to as a basis -- under which our dataset of interest admits sparse representations.  The traditional process of identifying such transformations relies on extensive knowledge about the data.  For instance, we frequently deploy the collection of wavelets transforms and discrete cosine transforms in image processing applications because of well-known properties about natural images.  The dictionary learning procedure may be viewed as a data-driven alternative in which an appropriate choice of basis is \emph{learned} directly from data \cite{OlsFie:96,OlsFie:97,AEB:06,MBP:14,Ela:10}.  As the resulting basis is specifically tuned to the dataset, it enjoys better performance compared to choices of bases specified using prior knowledge in many instances \cite{ElaAha:06,Ela:10}.  More importantly, dictionary learning is useful in instances where one lacks the appropriate domain specific knowledge to identify a basis -- one simply applies dictionary learning techniques to learn a suitable choice of basis.

\subsection{Group Invariant Dictionaries}

Symmetries and invariances occur in a wide range of scientific and engineering domains.  For instance, in image processing applications where the data takes the form of image patches segmented from larger natural images, one might expect that these patches possess some form of translation or rotation invariance.  In time series analyses where the data takes the form of short time series segmented from longer time series, one might expect that these time series possess some form of shift invariance occurring across time.  In processing data over graphs, one might consider invariance with respect to re-labelling of nodes, while in tomographic applications, one might wish to incorporate invariances with respect to rigid rotations.  In view of the prevalence of such symmetries arising in applications, it is natural to consider learning dictionaries that also respect such symmetries.

There are several concrete advantages to learning such dictionaries.  First, the incorporation of symmetries allows us to identify multiple basis elements as being equivalent up to an appropriate transformation.  This identification allows us to reduce the degrees of freedom that are involved in the dictionary learning task, and subsequently learn bases with greater statistical relevance.  Second, the incorporation of such symmetries prevents the learned dictionary from introducing unintended biases; for instance, in learning dictionaries for image patches, we may prefer dictionaries that do not favor upright orientations of certain dictionary elements.

\textbf{Reducing sample complexity by incorporating invariances.}  In the following, we make the case for learning dictionaries that incorporate the appropriate invariances via a numerical experiment on synthetically generated data.  Our description is brief, and we defer further experimental details to Section \ref{sec:numexp}.  Given a vector $(d_1,\ldots,d_q)^{\intercal}$, we define an \emph{integer shift} as any vector of the form $(d_k,d_{k+1},\ldots,d_q,d_1,\ldots,d_{k-1})^{\intercal}$ for some $k$, $0\leq k \leq q-1$.  We consider learning a dictionary from a dataset $\{ \by^{(i)} \}_{i=1}^{1000} \subset \R^{30}$ that is generated from a dictionary that possesses integer shift invariance -- that is, each data-point $\by^{(i)}$ is expressible as the linear sum of a small number of integer shifts of a collection of vectors.  In Figure \ref{fig:ComparisonRecovery_Intro}, we compare the performance of two different dictionary learning algorithms -- the first incorporates integer shift invariances as a prior, and the second is Regular Dictionary Learning (DL) which does not incorporate such a prior.  We repeat the experimental set-up over $10$ different random initializations of both algorithms, and we compare the distance between the ground truth dictionary and each iterate of both algorithms.  We observe that our framework converges to the ground truth dictionary in about $30$ iterations whereas the iterates from Regular DL do not converge to the ground truth, even after $100$ iterates.

\begin{figure}
	\centering
	\includegraphics[width=0.6\textwidth]{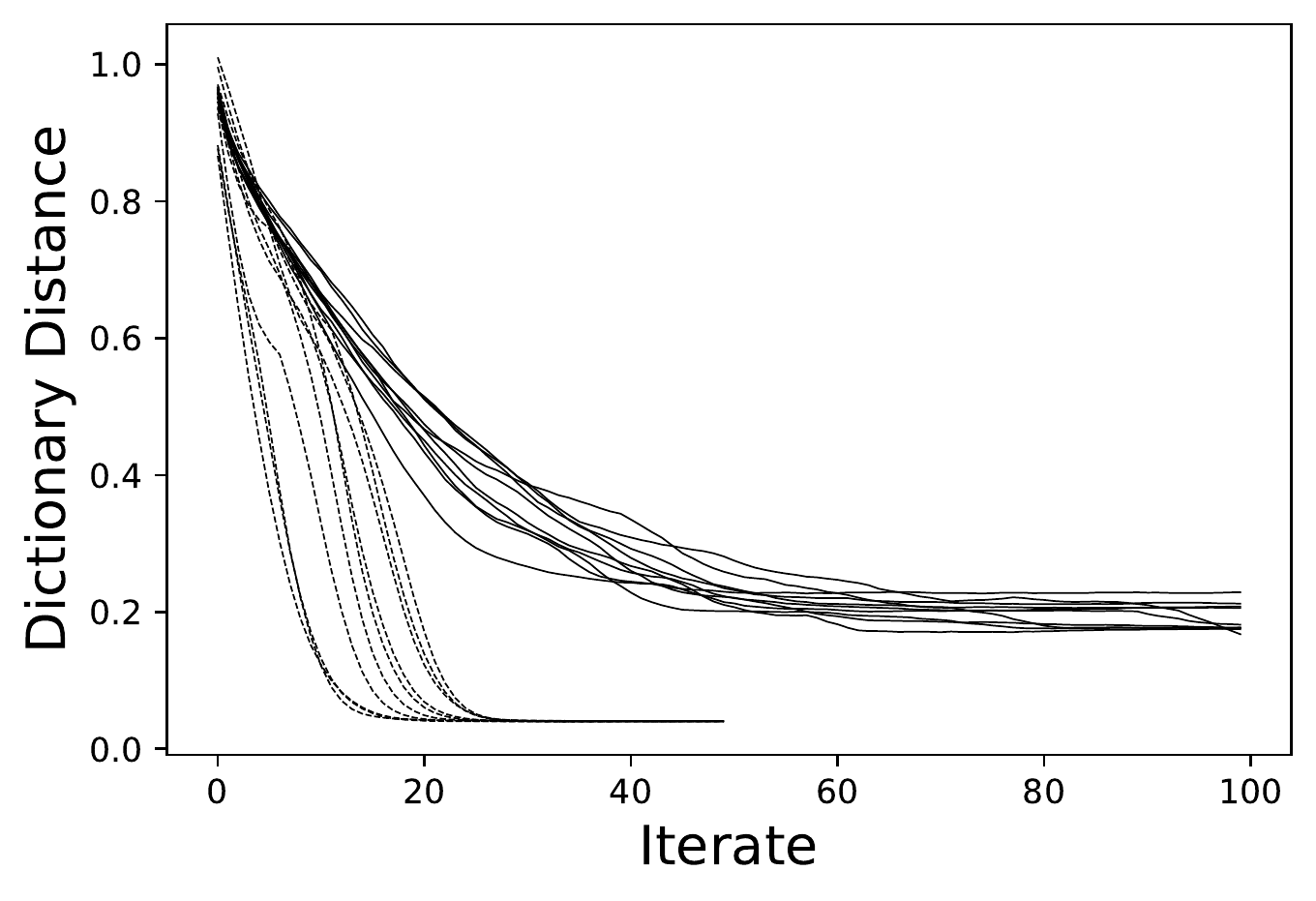}
	\caption{Comparison of learning a shift invariant dictionary using an algorithm that incorporates shift invariance as a prior (dashed lines) with an algorithm that does not (solid lines).}
	\label{fig:ComparisonRecovery_Intro}
\end{figure}

\subsection{Prior and Related Works}

\noindent \textbf{Convolutional Dictionary Learning.}  Our framework is motivated by a line of work that learns dictionaries possessing \emph{integer shift} invariance.  Concretely, let $\mathcal{D} \subset \R^{q}$ be a collection of vectors.  We say that $\mathcal{D}$ is \emph{integer shift invariant} if $\ba \in \mathcal{D}$ implies that all integer shifts of $\ba$ are also in $\mathcal{D}$.  The goal in Convolutional Dictionary Learning (DL) is to learn dictionaries that are integer shift invariant \cite{Wohlberg:15,LewSej:17,PRSE:17,JVLG:06,BluDav:06,PABD:06,GRKN:07,Rusu:20,ZSE:19,GW:18,LGWY:18}.    

The Convolutional DL problem admits a more compact description by means of convolutions (as its name suggests):    Given vectors $\bu,\bv \in \R^d$, the convolution $ \bu * \bv$ is the $d$-dimensional vector whose $i$-th coordinate is the sum $\sum_{k=1}^{q} u_{k} v_{i-k}$.  The Convolutional DL problem can thus be described as one of learning a collection of vectors $\{ \ba_j \}_{j=1}^{q}$ such that data is well approximated as linear sums of convolutions of these $\ba_j$'s with a corresponding collection of vectors of equal dimension:
\begin{equation} \label{eq:cdl}
\by^{(i)} \approx \sum_{j=1}^{q} \ba_j \ast \bx^{(i)}_j , ~\text{s.t.}~ \bx^{(i)}_j \in \R^{d}, \bx^{(i)}_j  ~ \text{is sparse for all }i,j.
\end{equation}
Convolutional DL techniques are used in a range of application domains where data exhibit shift invariances, with image processing and audio processing being prominent examples \cite{JVLG:06,BluDav:06,PABD:06,GRKN:07}. \\

\noindent \textbf{Group Convolutional Neural Networks.}  Broadly speaking, convolutional neural networks can be viewed as instances of neural networks that incorporate an appropriate form of shift invariances.  Such techniques have been empirically observed to be significantly more powerful (and perhaps considered state-of-the-art) than traditional neural network architectures that do not incorporate such priors in tasks such as image classification \cite{KSH:12}.  

There is a body of work that seeks to extend the ideas of convolutional neural networks to more general symmetries \cite{WGTB:17,GD:14,CW:16,KT:18}.  The basic ingredient in these networks is that the output from applying transformations (such as shifts or rotations) in the input layer should yield the same outcome had we only apply the same set of transformations on the output layer.  Such a property is known as \emph{equivariance}, and is key for generalizing the structure of convolutional neural networks to accommodate more general symmetries.  

The key difference between this body of work and ours is that these works do \emph{not} seek sparse representations whereas sparse representations are central to our set-up, which subsequently necessitates the development of appropriate penalty functions to achieve our goal.  In addition, while some of these works such as \cite{CW:16,KT:18} prescribe conceptual frameworks for incorporating group equivariant structure within neural networks, there remains a significant gap in practically implementing these methods especially in settings where the group is continuous \cite{WGTB:17}.  In contrast, one of the key contributions of our framework is to overcome certain computational difficulties that arise precisely because of continuous symmetries.  In Section \ref{sec:conc}, we discuss future directions stemming from our work in the context of neural networks.  \\

\noindent \textbf{Semidefinite Programming-Representable Regularizers.}  In Section \ref{sec:example_ctsshift}, we describe an approach for learning dictionaries that are continuously shift invariant.  As we later show, a key ingredient is to express the regularizers associated to such dictionaries via semidefinite programming (SDP).  A prior work that is conceptually related is \cite{SC:19}, which proposes an extension of dictionary learning to that of learning \emph{infinite dictionaries} that are expressible via SDPs.  The key difference between our work and \cite{SC:19} lies in the role that SDP descriptions play -- in \cite{SC:19}, SDP descriptions offer a framework for describing infinite collection of basic building blocks in a tractable fashion -- notably, the framework does \emph{not} incorporate any form of symmetry; in Section \ref{sec:example_ctsshift}, SDP descriptions arises because of the type of symmetry we wish to incorporate in the dictionary.

\subsection{Our Contributions}

In this paper, we introduce an algorithmic framework for learning dictionaries that are invariant under more general symmetries.  In particular, our framework generalizes prior methods for Convolutional DL.

The key technical difficulty in generalizing these prior works lies in identifying a suitable parameterization of a dictionary that is group invariant.  To provide some context, prior methods for dictionary learning operate on the basis of performing updates in alternating directions in which (i) we fix a linear map $D \in \R^{d\times q}$ representing dictionary elements and, given data vectors $\{\by^{(i)}\}_{i=1}^{n} \in \R^d$, compute sparse vector $\{\bx^{(i)}\}_{i=1}^{n} \in \R^{q}$ such that $\by^{(i)} \approx D \bx^{(i)}$, and (ii) fix the sparse vectors $\{\bx^{(i)}\}_{i=1}^{n} \in \R^{q}$ and update a linear map $D$ so that $\by^{(i)} \approx D \bx^{(i)}$. These methods require us to express the entire dictionary $\mathcal{D}$ explicity, and hence are no longer feasible if, for instance, the $\mathcal{D}$ contains a continuum of elements.

To address such difficulties, we focus on symmetries that are expressible as a matrix group action.  More specifically, our framework requires every dictionary element to be expressible as the \emph{orbit} of some \emph{matrix group} acting on a collection of \emph{generators}:
\begin{equation} \label{eq:dict_orbit}
\mathcal{D} = \left\{  G \, \ba \,:\, G \in \grp, \ba \in \{ \ba_j \}_{j=1}^{q} \right\}.
\end{equation}
The matrix group $\grp$ expresses the symmetry, and the collection of generators $\{ \ba_j \}_{j=1}^{q}$ are the basic (or canonical) atoms from which we describe the entire dictionary. 

In Section \ref{sec:framework}, we introduce our framework for learning group invariant dictionaries based on solving the following minimization instance:
\begin{equation*}
\underset{\substack{ \{Z_j^{(i)}\}_{i,j=1}^{q,n} \subset \R^{d\times d}\\ \{\ba_j\}_{j=1}^{q} \subset \R^{d} }}{\arg \min}  ~ \sum_{i=1}^{n} ( \frac{1}{2} \| \by^{(i)} - \sum_{j} Z_j^{(i)}  \ba_j \|_2^2 + \lambda \sum_j \|Z_j^{(i)}\|_{\grp}). \end{equation*}
Here,
\begin{equation*}
\| Z \|_{\grp} ~ := ~ \inf \{ t : t Z \in \mathrm{conv}(\grp) \}.
\end{equation*}
is a penalty function that is useful for promoting succinct representations with respect to the collection $\grp$.  The penalty function is known as an \emph{atomic norm} in the literature \cite{CRPW:12}.

The key conceptual contribution in our framework that addresses the issues we raise is the following: Conventional wisdom tells us to decouple the learning task into the variables $D$ and $\bx^{(i)}$ -- the linear map $D$ contains the generators \emph{and} its transformed copies, while the vectors $\bx^{(i)}$ only contains the sparse coding information.  Our framework proposes decoupling the learning task into the variables $\ba_j$ and $Z^{(i)}_j$ -- the vectors $\ba_j$ represent the generators but \emph{not} the transformed copies, while the matrices $Z^{(i)}_j$ combine sparse coding information with group transformation information.  As we show in Section \ref{sec:framework}, the dictionary learning task reduces to one of finding a suitable collection of generators $\{ \ba_j \}_{j=1}^{q}$.  This reduction is crucial because it allows us to learn infinite dictionaries so long as they are finitely parameterized.

\subsection{Notation}

In the remainder of this paper, we adopt the notational convention whereby $a$ denotes a scalar, $\ba$ denotes a vector, $A$ denotes a matrix, and $\mathcal{A}$ denotes a set or a collection.  Given a set $\mathcal{C}$, we denote the induced norm $ \| \bx \|_{\mathcal{C}} :=  \inf \{ t : t \, \bx \in \mathrm{conv}(\mathcal{C}) \}$.  Then $\| \bx \|_{\mathcal{C}} < +\infty$ if $\bx \in \mathrm{Span}(\mathcal{C})$, and we adopt the convention $\| \bx \|_{\mathcal{C}} = +\infty$ if $\bx \notin \mathrm{Span}(\mathcal{C})$, as is standard in convex analysis.  Subsequently, minimizing objectives that incorporate $\| \cdot \|_{\mathcal{C}}$ as a penalty necessarily enforces the solution to reside in $ \mathrm{Span}(\mathcal{C})$.  Finally, an superscript asterisk $^*$ (but not $^\star$) denotes the complex conjugate, a regular asterisk $*$ denotes convolution, $^\intercal$ denotes the regular transpose, while $^\dagger$ denotes the Hermitian transpose.
\section{Framework} \label{sec:framework}

We define a \emph{dictionary} $\mathcal{D} \subset \R^d$ to be a collection of vectors that is possibly infinite.  The dictionary represents the basic building blocks from which we describe our dataset of interest.  As such, the dictionary learning problem can be described as one of computing an appropriate dictionary for a given dataset so that each data-point is well approximated as the linear sum of few dictionary elements.

In the following, we describe the key ingredients necessary to apply our framework.
\begin{enumerate}
\item[(A1)] \emph{Matrix group representation.} First, we require the invariance to be expressible as the \emph{linear action} of a matrix group $\grp$  acting on data.  More precisely, given data $\bx \in \mathbb{R}^{d}$, we assume that \emph{every} transformed version of $\bx$ is expressible as
\begin{equation*} \label{eq:assumption_linear}
G \bx \quad \text{ where } \quad G \in \grp,  ~ G \in \mathrm{GL}(\mathbb{R},d). 
\end{equation*}
We say that a dictionary $\mathcal{D}$ is \emph{invariant} with respect to $\grp$ if $\ba \in \mathcal{D}$ implies that $G \ba \in \mathcal{D}$ for all $G \in \grp$ and all $\ba \in \mathcal{D}$.

\item[(A2)] \emph{Finite generation.}  Second, our framework is only applicable to learning dictionaries that are \emph{finitely generated}.  More precisely, we require $\mathcal{D}$ to be expressible as the action of $\grp$ acting on a finite collection of \emph{generators} $\{ \ba_1,\ldots,\ba_q \}$:
\begin{equation*} \label{eq:assumption_finitegeneration}
\mathcal{D} = \{ G \, \ba : \ba \in \mathcal{A}, \, G \in \grp \}, \mathcal{A} = \{ \ba_1,\ldots,\ba_q \}.
\end{equation*}
From a practical consideration, the finite generation stipulation is by no means restrictive.  The reason we emphasize finite generation in our set-up is because it translates to a finite parameterization of $\mathcal{D}$ even if the group $\grp$ is infinite.

\item[(A3)] \emph{Origin symmetry.}  Third, we require the matrix group $\grp$ to contain the negative identity matrix $-I$:
\begin{equation*} \label{eq:assumption_reflection}
-I \in \grp .
\end{equation*}
Note that it makes practical sense to include such an assumption -- it simply reflects the fact that if $\bd$ is a dictionary element, then one would reasonably expect that its negation $-\bd$ is also contained in $\mathcal{D}$. 

\item[(A4)] \emph{Tractable descriptions of $\mathrm{conv}(\grp)$.}  Our fourth ingredient requires to provide tractable descriptions of the set $\mathrm{conv}(\grp)$.  At a conceptual level, our stipulation is equivalent to being able to optimize over the set $\mathrm{conv}(\grp)$ tractably, which we require as a sub-routine in our procedure.
\end{enumerate}

\subsection{Succinct Representations with respect to a Dictionary}  

Our algorithm for learning a group invariant dictionary requires us to perform the following task as a sub-routine:  
\begin{quotation}
Given a (finitely generated) dictionary $\mathcal{D} \subset \R^{d}$ and a vector $\by \in \R^{d}$, compute an approximation of $\tilde{\by} \approx \by$ such that $\tilde{\by}$ is expressible as the linear sum of few elements from $\mathcal{D}$.
\end{quotation}
    
The analogue of the above procedure when specialized to Regular Dictionary Learning entails the computational task of recovering as sparse vector from affine measurements:  Given a linear map $D \in \R^{d \times q}$ and a vector $\by \in \R^d$, compute a suitably sparse vector $\bx \in \R^{q}$ so that $\by \approx \tilde{\by} = D \bx$.  Here, the columns of the linear map $D$ are the dictionary elements, and the constraint that $\bx$ is sparse is equivalent to the requirement that we use few dictionary elements in the approximation of $\by$.  The na\"ive approach to recovering the sparse vector $\bx$ is to employ a combinatorial search, which is infeasible for problem instances in moderate to large dimensions.  However, due to its importance in a wide range of statistical and signal processing tasks, numerous procedures that work well in practice and provably work in certain instances have been developed.  One such class of methods is based on a convex relaxation in which we estimate $\bx$ by minimizing a least squares loss augmented with a L1-norm \cite{Don:06,Don:06b,CRT:06}:
\begin{equation} \label{eq:compressedsensing}
\underset{\bx}{\mathrm{argmin}} ~ \frac{1}{2} \| \by - D \bx \|_2^2 + \lambda \|\bx\|_1.
\end{equation}
Here, $\lambda$ denotes a (positive) regularization parameter.  Such methods are particularly powerful because these are based on solutions of a tractable optimization instance, and are provably effective at finding the sparsest solutions to the problem \cite{Don:06,Don:06b,CRT:06}.  In the next section, we introduce the appropriate generalization of \eqref{eq:compressedsensing} for our set-up.

\subsection{Succinct Representations via Atomic Norms} \label{sec:atomic_norms}

Let $\mathcal{C} \subset \R^{d}$ be a compact set.  We say that a vector $\by \in \R^d$ admits a succinct representation with respect to $\mathcal{C}$ if it is expressible as the linear sum of a small number of elements from $\mathcal{C}$:
\begin{equation*}
\by = \sum_{i \in \mathcal{I}} c_{i} \ba_i \quad \text{where} \quad \ba_i \in \mathcal{C}, \quad \text{and} \quad |\mathcal{I}| \ll d.
\end{equation*}
We remark that the cardinality of the set $\mathcal{C}$ is permitted to be arbitrarily large; in particular, the set $\mathcal{C}$ may be infinite or uncountable.  This notion of succinct representations generalizes several notions of structured signals arising in applications.  For instance, the collection of structured objects when $\mathcal{C}  = \{ \pm \be_j : 1\leq j \leq q \}$ is specialized to the collection of signed standard basis vectors corresponds to sparse vectors.  Similarly, the collection of structured objects when $\mathcal{C} = \{ \bu \bv^{\prime} : \|\bu\|_2 = \|\bv\|_2 = 1 \}$ is specialized to the collection of rank-one matrices with unit Frobenius-norm corresponds to low-rank matrices.  We refer the interested reader to \cite{CRPW:12} for a more extensive list of examples.

The key property concerning objects that admit succinct representations with respect to some collection $\mathcal{C} \subset \R^{d}$ is that the \emph{atomic norm} induced by the convex hull of $\mathcal{C}$ is a convex penalty function that is effective at inducing structure as succinct representations with respect to $\mathcal{C}$.  More formally, we define the atomic norm with respect to $\mathcal{C}$ as the following function \cite{CRPW:12}:
\begin{equation} \label{eq:atomicnorm_defn}
\| \bx \|_{\mathcal{C}} = \inf \, \{  t : \bx \in t \cdot \mathrm{conv}(\mathcal{C}) \}.
\end{equation} 
The function $\| \cdot \|_{\mathcal{C}}$ is also known as the gauge function or the Minkowski functional defined with respect to $\mathrm{conv}(\mathcal{C})$.  The convexity of $\| \cdot \|_{\mathcal{C}}$ follows from the fact that $\mathrm{conv}(\mathcal{C})$ is convex.  In order for the function $\| \cdot \|_{\mathcal{C}}$ to define a true \emph{norm}, we also require the set $\mathrm{conv}(\mathcal{C})$ to be centrally symmetric -- that is, $\bx \in \mathrm{conv}(\mathcal{C})$ if and only if $-\bx \in \mathrm{conv}(\mathcal{C})$.  In the sequel, we take $\mathcal{C}$ to be the set $\mathcal{D}$, and hence $\| \cdot \|_{\mathcal{C}}$ defines a true norm whenever $-I \in \grp$ -- this is precisely Assumption (A3).

Under the additional assumption that the centroid of the set $\mathrm{conv}(\mathcal{C})$ is at the origin -- this is satisfied if $\mathcal{C}$ is centrally symmetric and compact -- then we have an alternative characterization of the atomic norm:
\begin{proposition} \label{thm:atomicnorm_sumcoeff}  Suppose the centroid of $\mathrm{conv}(\mathcal{C})$ is the origin.  Then
\begin{equation*}
\| \bx \|_{\mathcal{C}} = \inf \, \left\{  \sum c_j : \bx = \sum c_j \ba_j, \ba_j \in \mathcal{C}, c_j > 0 \right\}.
\end{equation*}
\end{proposition}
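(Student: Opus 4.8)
The plan is to prove the two bounds separately, after rewriting the scaled convex hull in conic form. Write $K := \mathrm{conv}(\mathcal{C})$, and let $f(\bx)$ denote the infimum on the right-hand side. The one structural fact I would establish first is the identity
\[
t \cdot K = \Big\{ \textstyle\sum_j c_j \ba_j : \ba_j \in \mathcal{C},\ c_j \ge 0,\ \textstyle\sum_j c_j = t \Big\} \qquad (t > 0),
\]
which is immediate from the definition of the convex hull: a point of $K$ is a finite convex combination $\sum_j \lambda_j \ba_j$ with $\lambda_j \ge 0$ and $\sum_j \lambda_j = 1$, and scaling by $t$ sends $\lambda_j \mapsto c_j = t\lambda_j$. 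Finiteness of the number of atoms is guaranteed by Carath\'eodory's theorem together with compactness of $\mathcal{C}$. This identity converts membership in a scaled convex hull into a nonnegative (conic) representation whose coefficient sum is exactly the scale $t$, and it is the bridge between the gauge definition and the sum-of-coefficients form.

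Next I would use the centroid hypothesis. Since the centroid of a compact convex body lies in its relative interior, the assumption places $0 \in \mathrm{relint}(K)$; in particular $0 \in K$. I would record two consequences: first, $sK \subseteq tK$ whenever $0 \le s \le t$ (for $y \in K$ one writes $s y = \tfrac{s}{t}(t y) + (1 - \tfrac{s}{t})\,0 \in tK$, using $0 \in K$ and convexity of $tK$), so the feasible scales $\{t \ge 0 : \bx \in tK\}$ form an upward ray and $\| \bx \|_{\mathcal{C}} = \inf\{ t : \bx \in tK\}$ is a genuine threshold, finite precisely on $\mathrm{Span}(\mathcal{C})$; second, it lets me dispatch the degenerate cases $\bx = 0$ (both sides equal $0$) and $\bx \notin \mathrm{Span}(\mathcal{C})$ (both sides $+\infty$) uniformly.

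With the identity in hand, the two inequalities are short. For $f(\bx) \le \| \bx \|_{\mathcal{C}}$: for any feasible $t$ with $\bx \in tK$, the identity produces a representation $\bx = \sum_j c_j \ba_j$ with $c_j \ge 0$ and $\sum_j c_j = t$; discarding the zero coefficients leaves a representation with $c_j > 0$ and the same coefficient sum, so $f(\bx) \le t$, and taking the infimum over $t$ gives the bound. For $\| \bx \|_{\mathcal{C}} \le f(\bx)$: given any representation $\bx = \sum_j c_j \ba_j$ with $c_j > 0$, set $t = \sum_j c_j > 0$; then $\bx / t$ is a convex combination of elements of $\mathcal{C}$, hence $\bx \in tK$, so $\| \bx \|_{\mathcal{C}} \le t = \sum_j c_j$, and infimizing over representations closes the argument. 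Compactness of $\mathcal{C}$, hence of $K$, may be invoked at the end to note that both infima are attained.

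The two inclusions are routine; the points demanding care — and where I expect the actual work — are (i) justifying the finite-support convex-combination representation (Carath\'eodory plus compactness) and the clean interchange between a scaled convex combination and a conic representation with prescribed coefficient sum, and (ii) isolating exactly what the centroid hypothesis buys: it is used to force $0 \in K$, which is what makes $t \mapsto tK$ monotone and $\| \cdot \|_{\mathcal{C}}$ a well-defined, finite-valued norm on $\mathrm{Span}(\mathcal{C})$, so that the gauge can be read off as a minimal coefficient sum. I would also verify that the strict inequality $c_j > 0$ in the statement does not shrink the right-hand infimum relative to the $c_j \ge 0$ version, which follows by deleting null terms.
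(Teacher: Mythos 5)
Your argument is correct and is the standard one; note that the paper itself states this proposition without proof (it is quoted from the atomic-norm literature, \cite{CRPW:12}), so there is no in-paper proof to diverge from. The heart of your proof --- that for $t>0$ the set $t\cdot\mathrm{conv}(\mathcal{C})$ is exactly the set of conic combinations of atoms with coefficient sum $t$, so the two infima range over the same set of scalars --- is precisely the right observation, and your handling of the edge cases ($\bx=0$, $\bx\notin\mathrm{Span}(\mathcal{C})$, discarding zero coefficients) is sound. Your isolation of what the centroid hypothesis buys (namely $0\in\mathrm{conv}(\mathcal{C})$, hence monotonicity of $t\mapsto t\,\mathrm{conv}(\mathcal{C})$ and good behavior of the gauge) is accurate; indeed the core identity needs nothing more than that, and Carath\'eodory's theorem is not actually required since convex-hull membership is already witnessed by finite convex combinations.
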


The atomic norm generalizes choices of convex penalty functions that are widely used to induce structure as sparse vectors or low-rank matrices.  Specifically, the atomic norm when specialized to $\mathcal{C} = \{ \be_j : 1\leq j \leq q \}$ being the collection of standard basis vectors recovers the L1-norm, and the atomic norm when specialized to $\mathcal{C} = \{ \bu \bv^{\prime} : \|\bu\|_2 = \|\bv\|_2 = 1 \}$ being the collection of rank-one matrices with unit Frobenius-norm recovers the matrix nuclear-norm (also known as the Schatten 1 norm).  In the following, we apply the atomic norm induced by the dictionary $\mathcal{D}$ to approximate a data vector sparsely with respect to $\mathcal{D}$.

\subsection{Atomic Norms for Group Invariant Dictionary Learning} \label{sec:atomicnorms_forgidl}

In the remainder of this section, let $\mathcal{D} = \{ G \, \ba : G \in \grp, \ba \in \mathcal{A} \}$,  $\mathcal{A} = \{ \ba_1,\ldots,\ba_q \}$, be a finitely generated dictionary.  
A prominent class of methods for representing an input sparsely with respect to some given dictionary $\mathcal{D}$ is to apply the \emph{proximal operator} with respect to the atomic norm induced by $\mathcal{D}$.  The use of such operators for performing denoising and obtaining sparse representations was initially studied in \cite{Donoho:95,DonJoh:94} -- more frequently referred to as soft-thresholding -- and later extended to general atomic norms in \cite{BTR:13}.  In our context, we obtain $\tilde{\by}$ as the solution of the following minimization instance:
\begin{equation} \label{eq:proxoperator}
\tilde{\by} \in \underset{\bz \in \R^{d}}{\arg \min} ~ \frac{1}{2} \| \by - \bz \|_2^2 + \lambda \| \bz \|_{\mathcal{D} }.
\end{equation}

However, the regularizer $\| \cdot \|_{\mathcal{D} }$ -- while defined abstractly in \eqref{eq:atomicnorm_defn} -- is not in a form that is evidently amenable to computation.  Our final ingredient is to provide an alternative characterization of $\| \cdot \|_{\mathcal{D} }$ in terms of the group $\grp$ and the generators $\mathcal{A}$.  Define the following atomic norm $\| \cdot  \|_{\grp}$ over the space of matrices $\R^{d\times d}$:
\begin{equation} \label{eq:grpatomicnorm}
\| Z \|_{\grp} = \inf \{ t : t Z \in \mathrm{conv}( \grp ) \}.
\end{equation}
Note that since Assumption (A3) guarantees that $\grp$ is centrally symmetric, the expression in \eqref{eq:grpatomicnorm} defines a norm. 
As a result of Proposition \ref{thm:atomicnorm_sumcoeff} as well as Assumptions (A1) and (A2), we obtain the following equivalent expression for the atomic norm $\| \cdot \|_{\mathcal{D} } $:
\begin{proposition} \label{thm:norm_equivexp}  
Given a matrix group $\grp$ and a finite collection of generators $\mathcal{A} = \{ \ba_1,\ldots,\ba_q\}$, let $\mathcal{D} = \{ G \, \ba : G \in \grp, \ba \in \mathcal{A} \}$ be the associated dictionary.  Suppose $-I \in \grp$.  Then 
\begin{equation*}
\| \bx \|_{\mathcal{D}} ~=~   \underset{Z_j \in \R^{d\times d}, 1\leq j \leq q}{\inf} ~\left\{ \sum_{j=1}^{q} \| Z_j \|_{\grp}  : \bx = \sum_{j=1}^{q} Z_j \ba_j \right\}.
\end{equation*}
\end{proposition}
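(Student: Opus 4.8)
The plan is to unfold both atomic norms using the coefficient-sum characterization of Proposition~\ref{thm:atomicnorm_sumcoeff}, and to exploit the fact that every dictionary element $\bd \in \mathcal{D}$ carries two pieces of bookkeeping---a group element $G \in \grp$ and a generator index $j$---which can be regrouped. To invoke Proposition~\ref{thm:atomicnorm_sumcoeff} for both $\| \cdot \|_{\mathcal{D}}$ and $\| \cdot \|_{\grp}$, I would first record that Assumption (A3) forces both $\grp$ and $\mathcal{D}$ to be centrally symmetric: if $G \in \grp$ then $-G = (-I) G \in \grp$, and likewise $G \ba \in \mathcal{D}$ implies $-G\ba = (-I)(G\ba) \in \mathcal{D}$. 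Together with the compactness of $\grp$ and the finiteness of $\mathcal{A}$, this places the centroid of each convex hull at the origin, so that Proposition~\ref{thm:atomicnorm_sumcoeff} applies and each norm equals the infimal sum of positive coefficients over conic representations by atoms.

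For the inequality asserting that the right-hand side is at most $\| \bx \|_{\mathcal{D}}$, I would start from a near-optimal atomic representation $\bx = \sum_k c_k \bd_k$ with $\bd_k \in \mathcal{D}$, $c_k > 0$, and $\sum_k c_k \leq \| \bx \|_{\mathcal{D}} + \epsilon$ (finitely many terms, by Carath\'eodory in $\R^d$). Writing each atom as $\bd_k = G_k \ba_{j_k}$ and collecting terms according to their generator index, I would set $Z_j := \sum_{k :\, j_k = j} c_k G_k$. Then $\sum_{j=1}^{q} Z_j \ba_j = \bx$, and since each $Z_j$ is exhibited as a conic combination of group elements, $\| Z_j \|_{\grp} \leq \sum_{k :\, j_k = j} c_k$. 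Summing over $j$ gives $\sum_j \| Z_j \|_{\grp} \leq \sum_k c_k \leq \| \bx \|_{\mathcal{D}} + \epsilon$; letting $\epsilon \downarrow 0$ yields the bound.

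For the reverse inequality, I would fix any feasible decomposition $\bx = \sum_j Z_j \ba_j$. The case where some $\| Z_j \|_{\grp} = +\infty$ is vacuous, so I may assume all are finite, whence each $Z_j \in \mathrm{Span}(\grp)$ admits a near-optimal representation $Z_j = \sum_l c_{jl} G_{jl}$ with $G_{jl} \in \grp$, $c_{jl} > 0$, and $\sum_l c_{jl} \leq \| Z_j \|_{\grp} + \epsilon / q$. Substituting gives $\bx = \sum_j \sum_l c_{jl} (G_{jl} \ba_j)$, a conic combination of atoms $G_{jl} \ba_j \in \mathcal{D}$, so Proposition~\ref{thm:atomicnorm_sumcoeff} yields $\| \bx \|_{\mathcal{D}} \leq \sum_{j,l} c_{jl} \leq \sum_j \| Z_j \|_{\grp} + \epsilon$. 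Letting $\epsilon \downarrow 0$ and taking the infimum over feasible $\{ Z_j \}$ closes the argument.

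The computation itself is routine; the step deserving genuine care is the regrouping by generator index, since this is the only place where Assumption (A2)---finite generation---is used, and it is precisely what lets a single matrix $Z_j$ absorb all the group-transformation information attached to the $j$-th generator. The other point I would verify rather than assume is that $\mathrm{conv}(\grp)$ and $\mathrm{conv}(\mathcal{D})$ genuinely meet the hypotheses of Proposition~\ref{thm:atomicnorm_sumcoeff}; this is where central symmetry from (A3) and compactness enter, and it is worth stating explicitly to guard against the infinite-group setting that motivates the whole framework.
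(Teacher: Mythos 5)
Your proposal is correct and follows essentially the same route as the paper: the grouping $Z_j := \sum_{k:\, j_k = j} c_k G_k$ is exactly the paper's construction $Z_j = \sum_k c_k \delta(\bb_k,\ba_j) G_k$, and both directions reduce to Proposition~\ref{thm:atomicnorm_sumcoeff}. If anything, you are slightly more complete, since you spell out the reverse inequality (which the paper dismisses as ``essentially similar arguments'') and explicitly verify the central-symmetry hypothesis needed to invoke Proposition~\ref{thm:atomicnorm_sumcoeff}.
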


\begin{proof}[Proof of Proposition \ref{thm:norm_equivexp}]
Let $\bx$ be arbitrary.  Let $\mathrm{LHS}:= \| \bx \|_{\mathcal{D}}$, and let $\mathrm{RHS}$ denote the optimal value of the right hand side expression.  Note that if $\bx$ is not in the span of $\mathcal{D}$, then the LHS is $+\infty$ by definition from \eqref{eq:atomicnorm_defn}, and the RHS is $+\infty$ by infeasibility.  As such, we may assume that $\bx \in \mathrm{Span}(\mathcal{D})$.  Fix $\epsilon > 0$.  Then we have $\bx = \sum_{k} c_k (G_k \bb_{k})$ where $c_k>0$, $G_k \in \grp$, $\bb_{k} \in \mathcal{A}$, and $\sum c_k < \mathrm{LHS} + \epsilon$.  Let $Z_j = \sum_{k} c_k \delta(\bb_{k},\ba_j) G_k$, where $\delta(\cdot,\cdot)$ is the Kronecker delta function that evaluates to one if and only if the two arguments are equal.  Then $\sum_{j=1}^{q} Z_j \ba_j = \sum_{j=1}^{q} ((\sum_{k} c_k \delta(\bb_{k},\ba_j) G_k) \ba_j) = \sum_{k} ((\sum_{j=1}^{q}  c_k \delta(\bb_{k},\ba_j) G_k) \ba_j)  = \sum_{k} c_k ( G_k \bb_k) = \bx$; i.e., the matrices $Z_j$ form a feasible solution to the convex program on the right hand side.  Based on Proposition \ref{thm:atomicnorm_sumcoeff}, it follows that $\|Z_j\| \leq \sum_{k} c_k \delta(\bb_{k},\ba_j)$.  Summing across $j$, we have $\sum_{j} \|Z_j\| \leq \sum_{j} \sum_{k} c_k \delta(\bb_{k},\ba_j) = \sum_{k} (\sum_{j}  c_k \delta(\bb_{k},\ba_j) ) = \sum_{k} c_k < \mathrm{LHS} + \epsilon$.  Finally, we take $\epsilon \rightarrow 0$ to conclude that $\mathrm{RHS} \leq \mathrm{LHS}$.  

An essentially similar set of arguments applied in the opposite direction gives us the reverse inequality $\mathrm{LHS} \geq \mathrm{RHS}$, from which we conclude the proof.
\end{proof}

Based on characterization result in Proposition \ref{thm:norm_equivexp}, our procedure for finding an approximation of $\by$ as a linear sum of few elements from $\mathcal{D}$ is to compute the optimal solution for the following minimization instance:
\begin{equation} \label{eq:atomicnorm_softthresholding}
	\underset{ \{Z_j\}_{j=1}^{q} \subset \R^{d\times d}}{\arg \min} ~~ \frac{1}{2} \| \by - \sum_{j} Z_j \ba_j \|_2^2 + \lambda \left( \sum_j \|Z_j\|_{\grp} \right).
\end{equation}

Note that our ability to minimize \eqref{eq:atomicnorm_softthresholding} tractably relies on us having tractable descriptions of the convex hull of $\grp$ -- this is precisely what Assumption (A4) stipulates.  Very frequently, tractable descriptions of $\mathrm{conv}(\grp)$ lead to tractable numerical procedures for minimizing \eqref{eq:atomicnorm_softthresholding}.

\subsection{Learning Group Invariant Dictionaries via Matrix Factorization}

We put the pieces together, and formally state our framework for learning group invariant dictionaries.  Let $\{ \by^{(i)} \}_{i=1}^{n} \subset \R^d$ denote the dataset of interest.  Suppose we wish to learn a dictionary that is invariant under the action of the matrix group $\grp$.  We do so by solving the following minimization instance:
\begin{equation} \label{eq:GIDL}
\begin{aligned}
\underset{\substack{ \{Z_j^{(i)}\}_{i,j=1}^{q,n} \subset \R^{d\times d}\\ \{\ba_j\}_{j=1}^{q} \subset \R^{d} }}{\arg \min} & ~ \sum_{i=1}^{n} ( \frac{1}{2} \| \by^{(i)} - \sum_{j} Z_j^{(i)}  \ba_j \|_2^2 + \lambda \sum_j \|Z_j^{(i)}\|_{\grp}). \\
\mathrm{s.t.} & \qquad \| \ba_j \|_2 = 1 \text{ for all } j
\end{aligned}
\end{equation}
Here, we impose the additional constraint $\| \ba_j \|_2 = 1$ to ensure that the resulting dictionary elements are well conditioned.

Suppose we let $\{ \hat{\ba}_{j} \}_{j=1}^{q}$ be the resulting optimal set of generators obtained from \eqref{eq:GIDL}.  Then the learned dictionary is $\mathcal{D} = \{ G \, \hat{\ba}_{j} : 1\leq j \leq q, G \in \grp \}$.

\subsection{Intermission}

We make a series of useful remarks concerning our framework.

\begin{enumerate}
\item From a conceptual perspective, Proposition \ref{thm:norm_equivexp} decouples the atomic norm $\|\cdot\|_{\mathcal{D}}$ into two components:  The first of which is the penalty function $\| \cdot \|_{\grp}$, and it depends purely on the symmetry group $\grp$.  The second of which is the affine equality $\bx = \sum_{j=1}^{q} Z_j \ba_j$, and it depends purely on the generators.  More importantly, the decoupling is essential to developing a computational algorithm for learning group invariant dictionaries -- it naturally leads to a strategy based on minimizing with respect to each variable, and which generalizes existing dictionary learning frameworks.

\item Since the penalty term in the objective \eqref{eq:atomicnorm_softthresholding} is an atomic norm, we know that the optimal solution typically admits succinct representations with respect to $\grp$.  However, we do not -- generally speaking -- know how the optimal solution decomposes succinctly as elements in $\grp$.  

\item In relation to our earlier point, we emphasize that our framework relies on the following somewhat surprising observation:
\begin{quotation}
For dictionary learning, we do \emph{not} require the explicit decomposition of a data-point into its atomic constituents (with respect to a dictionary estimate) in order to learn a dictionary. 
\end{quotation}
We make this point clearer with Regular Dictionary Learning:  The first step of any iterative procedure is to compute a sparse vector $\bx^{(i)}$ such that $\by^{(i)} \approx D \bx^{(i)}$ for every data vector $\by^{(i)}$.  The second step is to update the dictionary estimate using the obtained vectors $\{\bx^{(i)} \}_{i=1}^{n}$.  If we perform the second step via a least squares minimization \cite{EAH:99} or a gradient descent, then there is no instance in which one has to refer to the exact location of the non-zero entries.\footnote{To be absolutely clear, there are certain classes of Regular Dictionary Learning algorithms such as the K-SVD \cite{AEB:06} and the ITKM \cite{Sch:16} that do depend on knowing the support; that is, the location of the non-zero entries of $\bx^{(i)}$.  Roughly speaking, these algorithms update the dictionary elements \emph{sequentially}, and as such require knowledge of the location of non-zero entries.}

\item Conversely, there are certain applications such as classification tasks where the explicit decomposition of a signal into its constituents is required.  The problem of obtaining the sparsest decomposition of a signal is computationally difficult, though a wide range of approximate techniques are available in the literature \cite{CDS:98}.

\item We emphasize that, given a generic group $\grp$, there is no general procedure for providing descriptions of $\mathrm{conv}(\grp)$ (and especially tractable ones) that are amenable to optimization.  Nevertheless, many interesting examples do admit tractable descriptions, and we describe some of these in Section \ref{sec:examples}.  We have strong reasons to believe that the question of providing tractable descriptions of $\mathrm{conv}(\grp)$ for generic $\grp$ is likely to be hard in general given that the broader question of providing tractable descriptions of convex sets is widely known to be difficult.  In fact, the question of providing conic programming representations of matrix groups remains an active research area (see Section \ref{sec:conc}).  It is for these reasons we state Assumption (A4) as it is.  

\end{enumerate}
\section{Algorithm} \label{sec:algorithm}

In this section, we describe our algorithm for learning group invariant dictionaries.  Our procedure is based on minimizing the objective \eqref{eq:GIDL} in alternating directions, and it relies on the observation that the objective \eqref{eq:GIDL} -- when keeping the variables $\{ \ba_j \}$ or the variables $\{Z_j^{(i)}\}$ fixed -- leads to a convex program.  Our algorithm also generalizes prior methods for regular dictionary learning and convolutional dictionary learning.  We summarize the description of our procedure in Algorithm \ref{alg:gdl_am}.

\begin{algorithm}[h] 
	\caption{Alternating Minimization-based Algorithm for Learning Group Invariant Dictionaries}
	\label{alg:gdl_am}
	\textbf{Input}: Initialization $\{\ba_j\}_{j=1}^{q}$, Data $\{\by^{(i)}\}_{i=1}^{n}$  \\
	\textbf{Require}: Normalized dictionary generators $\{\ba_{j} \}_{j=1}^{q} $ \\
	\textbf{Algorithm}: Repeat until convergence \\
	\textbf{1.}[Fix $\ba_i$, update $Z_j^{(i)}$] Update $Z_1^{(i)}, \ldots, Z_q^{(i)}$ as solutions to the following convex program
	\begin{equation*}
	\underset{Z_1,\ldots,Z_q}{\arg \min} ~ \frac{1}{2} \| \by^{(i)} - \sum_{j} Z_j^{(i)} \ba_j \|_2^2 + \lambda \cdot \sum_j \|Z_j^{(i)}\|_{\grp}, 1 \leq i \leq n.
	\end{equation*}
	\textbf{2.}[Fix $Z_j^{(i)}$, update  $\ba_j$] Solve the least squares problem 
	\begin{equation*}
	\left(\ba_1,\ldots,\ba_q \right) \leftarrow \underset{\ba_1,\ldots,\ba_q}{\arg \min} ~ \sum_{i=1}^{n} \| \by^{(i)} - \sum_{j} Z_j^{(i)}  \ba_j \|_2^2.
	\end{equation*} 
	\textbf{3.}[Normalize] $\ba_j \leftarrow \ba_j / \|\ba_j\|_2$
\end{algorithm}

\textbf{Sparse representations via Atomic Norm Regularization.}  The first step in each iteration is an update step in which we keep the variables $\{ \ba_j \}_{j=1}^{q}$ fixed, and we minimize the objective \eqref{eq:GIDL} with respect to the variables $\{Z_j^{(i)} \}_{j=1,i=1}^{q,n}$.  This entails updating $Z_1^{(i)}, \ldots, Z_q^{(i)} \in \R^{q\times q} $ as solutions of the following convex program:
\begin{equation} \label{eq:atomicnorm_sparsecoding}
\underset{Z_1,\ldots,Z_q}{\arg \min} ~ \frac{1}{2} \| \by^{(i)} - \sum_{j} Z_j \ba_j \|_2^2 +  \lambda \cdot \sum_j \|Z_j\|_{\grp}, ~ 1 \leq i \leq n.
\end{equation}

While the above update step can be tractably solved in many cases, practical implementations of methods for solving \eqref{eq:atomicnorm_sparsecoding} can be fairly expensive when the data dimension becomes moderately large.  We suggest some possible mitigation measures in the following:

\begin{enumerate}
\item \textbf{Parallelization.}  First we observe that \eqref{eq:atomicnorm_sparsecoding} is decoupled across the data variables $\by^{(i)}$ for $1 \leq i \leq n$.  As such, the update step \eqref{eq:atomicnorm_sparsecoding} can be performed in parallel.
\item \textbf{Solve \eqref{eq:atomicnorm_sparsecoding} approximately.}  Alternatively, one can attempt to solve \eqref{eq:atomicnorm_sparsecoding} approximately or even very crudely at each iteration.  In our numerical experiments in Sections \ref{sec:numexp}, we solve \eqref{eq:atomicnorm_sparsecoding} via first order methods in which we apply a very modest amount of iterations.  Our numerical experiments, and in particular those on synthetic data in Section \ref{sec:numexp}, suggest that solving \eqref{eq:atomicnorm_sparsecoding} crudely is frequently sufficient to make progress in the overall algorithm.
\item \textbf{Convex relaxations.}  A different approach to solving \eqref{eq:atomicnorm_sparsecoding} approximately is via convex relaxations in which we replace the penalty function $\|\cdot \|_{\grp}$ with an different penalty function that leads to a computationally cheaper program in \eqref{eq:atomicnorm_sparsecoding}.  Concretely, let $\tilde{\grp}$ be a convex outer approximation of the set $\grp$, and let $\|\cdot\|_{\tilde{\grp}}$ be the resulting induced norm.  We replace the penalty function $\|\cdot \|_{\grp}$ with $\|\cdot\|_{\tilde{\grp}}$ for choices of outer approximations whose induced norm is cheaper to evaluate compared to $\|\cdot \|_{\grp}$.

Convex relaxations are used in a wide range of applications as a principled procedure for obtaining cheaper approximations of intractable optimization instances.  In our context, convex relaxations are useful in settings where the set $\mathrm{conv}(\grp)$ is intractable to describe.  We discuss connections between convex relaxations and the group structure of $\grp$ in Section \ref{sec:conc}.
\end{enumerate}

\textbf{Dictionary update.}  In the second step of each iteration, we update the estimates of the dictionary generators $\{ \ba_j \}_{j=1}^{q}$ while keeping the variables $\{Z_j^{(i)} \}_{j=1,i=1}^{q,n}$ fixed.  As the variables $\{ \ba_j \}_{j=1}^{q}$ only appear in the squared error loss objective function \eqref{eq:GIDL}, the update step reduces to the solution of a least squares system:
\begin{equation} \label{eq:modupdate}
\left(\ba_1,\ldots,\ba_q \right) \leftarrow \underset{\ba_1,\ldots,\ba_q}{\arg \min} ~ \sum_{i=1}^{n} \| \by^{(i)} - \sum_{j} Z_j^{(i)}  \ba_j \|_2^2 .
\end{equation}

Note that it is possible to perform the dictionary update via other means, such as by taking a gradient step.  In fact, in regular dictionary learning, updating dictionary estimates based on least squares is frequently referred to as the Method of Optimal Directions (MOD) \cite{EAH:99} -- the proposed update in \eqref{eq:modupdate} based on least squares is precisely the analog of the MOD in our set-up.
\section{Examples} \label{sec:examples}

In this section, we describe instantiations of our framework \eqref{eq:GIDL} for a series of examples.  Our discussion focuses on: (i) specifying the group $\grp$ that expresses the desired invariance, and (ii) providing a \emph{conic programming} description of the set $\mathrm{conv}(\grp)$.  Using widely available software for solving convex programs, our discussion provides the basic tools necessary to implement our framework in a number of settings.  We conclude this section with a description of additional algorithmic simplifications for certain classes of invariances. 

\subsection{Regular Dictionary Learning} \label{sec:example_regular}

Our first example describes how our framework \eqref{eq:GIDL} expresses Regular Dictionary Learning (DL).  The group $\grp_{\mathrm{reg}} = \{ \pm I \}$ is the identity with its negation.  The linear span of $\grp_{\mathrm{reg}}$ are diagonal matrices with constant entries
\begin{equation*}
Z = \left(
\begin{array}{ccc}
z & &\\
& \ddots & \\
& & z
\end{array}
\right),
\end{equation*}
and the atomic norm is $\| Z \|_{\grp_{\mathrm{reg}}} = |z|$.  

Note that the Regular DL problem is more typically cast as follows
\begin{equation} \label{eq:reg_dl}
\begin{aligned}
\underset{ A \in \mathbb{R}^{d \times q} , \bx^{(i)} 
\in \mathbb{R}^{q}}{\arg \min} & ~ \sum_{i=1}^{n} \left( \frac{1}{2} \| \by^{(i)} - A \bx^{(i)} \|_2^2 + \lambda \|\bx^{(i)}\|_{1}\right) \\
\mathrm{s.t.} & \quad A = [\ba_1| \ldots | \ba_q], \quad \| \ba_j \|_2 = 1 \text{ for all } j.
\end{aligned}
\end{equation}
We can recover the above formulation within \eqref{eq:GIDL} by setting $\grp = \grp_{\mathrm{reg}}$, choosing the number of generators to be $q$, and identifying the constant variable in $Z_j^{(i)}$ with $\bx^{(i)}_j$.

\subsection{Integer Shift Invariance / Convolutional Dictionary Learning} \label{sec:example_conv}

Our second example concerns integer shift invariance, and describes how our framework \eqref{eq:GIDL} expresses Convolutional DL.  Assume that the data resides in $\mathbb{R}^{d}$.  Recall that a circulant matrix takes the following form
\begin{equation*}
\left(
\begin{array}{ccccc}
z_0 & z_{d-1} & \ldots & z_2 & z_1 \\
z_1 & z_0 & z_{d-1} & \ldots & z_2 \\
\vdots & z_1 & z_0 & \ddots & \vdots \\
z_{d-2} & \vdots & \ddots & \ddots & z_{d-1} \\
z_{d-1} & z_{d-2} & \ldots & z_1 & z_0
\end{array}
\right)
\end{equation*}
Let $T_r$ be the circulant matrix whose leading column is a vector whose $r$-th entry is equal to one (here, we use the convention that the leading coordinate is the $0$-th coordinate), and whose remaining entries are equal to zero.  Then a shift by $r$ coordinates can be represented by left multiplication by the matrix $T_r = (T_1)^{r}$.  Subsequently, it follows that $\grp_{\mathrm{Int}} = \{ T_r : 0 \leq r \leq d-1 , r \in \mathbb{Z} \}$. 

Linear combinations of $T_r$'s are circulant matrices, and hence the linear span of $\mathrm{conv}(\grp_{\mathrm{Int}})$ are circulant matrices.  The associated atomic norm $\| \cdot \|_{\grp_{\mathrm{Int}}}$ is given by the absolute sum of its entries
\begin{equation} \label{eq:atomicnorm_circulant}
\| Z \|_{\grp_{\mathrm{Int}}} = \sum_{k=0}^{d-1} |z_k| .
\end{equation}

Note that the Convolutional DL problem is more typically cast as follows
\begin{equation*}
\begin{aligned}
\underset{ \ba_j , \bx^{(i)}_{j} \in \mathbb{R}^{d} }{\arg \min} & ~ \sum_{i=1}^{n} \left( \frac{1}{2} \| \by^{(i)} - \sum \ba_j \ast \bx_j^{(i)} \|_2^2 + \lambda \sum \|\bx^{(i)}_j \|_{1}\right) \\
\mathrm{s.t.} & \qquad \| \ba_j \|_2 = 1 \text{ for all } 1 \leq j \leq q.
\end{aligned}
\end{equation*}
We can recover the above formulation within \eqref{eq:GIDL} by setting $\grp = \grp_{\mathrm{Int}}$, choosing the number of generators to be $q$, and identifying the leftmost column of $Z_j^{(i)}$ with $\bx^{(i)}_j$.

\subsection{Continuous Shift Invariance} \label{sec:example_ctsshift}

Our third example concerns a continuous analog of shift invariance, and is motivated by applications in time series analysis.  Suppose that we observe a continuous signal at regular (discrete) intervals, and we wish to learn a dictionary that represents small time segments of the signal.  

One drawback of Convolutional DL is that it does not provide us a mechanism of identifying two observations that are derived by sampling the signal at regular intervals, but spaced apart by a \emph{non integral} shift \cite{SFB:19}.  To this end, Song, Flores and Ba expand on the ideas of Convolutional DL and incorporate a grid-refinement step followed by an interpolation step to learn a continuously shift invariant dictionary \cite{SFB:19}.  In the following, we describe a different approach whereby we express the continuous shift invariance from the outset.  Our approach utilizes trigonometric interpolation and it relies on the existence of semidefinite programming (SDP) representations of certain sets.  In Section \ref{sec:example_convinterp}, we discuss how the framework in \cite{SFB:19} may be viewed as one that interpolates Convolutional DL and our ideas.  

To simplify the exposition, we assume that the data is in $\mathbb{R}^{2d+1}$; i.e., the data dimension is odd.  There is a minor difference in the description of our framework depending on whether the data dimension is odd or even, and this arises because of the way trigonometric interpolation is applied.  

\subsubsection{Description of $\grp_{\mathrm{cts}}$}  Let $\omega = \exp(2i\pi / (2d+1))$, and denote the (normalized) DFT matrix
\begin{equation} \label{eq:DFT_matrix_ctsshift}
F = \frac{1}{\sqrt{d}} 
\left(
\begin{array}{cccc}
\omega^{(-d) \cdot 0} & \omega^{(-d) \cdot 1} & \ldots & \omega^{(-d)  \cdot (2d)} \\
\omega^{(-d+1) \cdot 0} & \omega^{(-d+1) \cdot 1} & \ldots & \omega^{(-d+1) \cdot (2d)} \\
\vdots & \vdots & & \vdots \\
\omega^{(d) \cdot 0} & \omega^{(d) \cdot 1} & \ldots & \omega^{(d) \cdot (2d)} 
\end{array}
\right).
\end{equation}
Let $l(\phi) \in \mathbb{C}^{(2d+1) } $ be the vector whose entries are $ ( \omega^{(-d) \cdot \phi}, \omega^{(-d+1) \cdot \phi}, \ldots, \omega^{(0) \cdot \phi}, \omega^{(1) \cdot \phi} , \ldots, \omega^{(d) \cdot \phi} ) $, and let $L(\phi)$ be the diagonal matrix whose entries are $l(\phi)$.  We say that two vectors $\tilde{\bx} \sim \bx$ if
\begin{equation} \label{eq:mastergrp}
\tilde{\bx} = (F^{\dagger} L(\phi) F) \bx, \qquad \phi \in [0,1).
\end{equation}  
Subsequently, we have $\grp_{\mathrm{cts}} = \{ (F^{\dagger} L(\phi) F) : \phi \in [0,1) \}$.  

\subsubsection{Description of atomic norm}  We proceed to provide a SDP description of the atomic norm induced by $\grp_{\mathrm{cts}} $.  First, note that since the transformation $F$ is unitary, and that the matrix $L$ is diagonal, it suffices to seek descriptions of the set $\mathrm{conv}(\{l (\phi) : \phi \in [0,1) \})$.  
The SDP-based description of the atomic norm relies on useful properties concerning matrices that are positive semidefinite (PSD) Toeplitz.  To this end, we state an intermediate result that summarizes the connection between such matrices and continuous shifts.  Denote $ \bv (\phi) := \left( \exp(i (2\pi) 0 \cdot \phi) , \exp(i (2\pi) 1 \cdot \phi) , \ldots , \exp(i (2\pi) d \cdot \phi)\right)$.  In the following, we use the symbol $+$ ($-$) as a sub-script to represent linear sums of atoms with positive (negative) coefficients.

\begin{proposition}\label{thm:ctsshift_sdp}
Let $\bx \in \R \times \mathbb{C}^{d}$.  Then the minimal values of the following optimization instances are equal:
\begin{equation} \label{eq:ctsshift_form1}
\inf ~ \{  \sum |c_i| ~:~ \bx = \sum_{i \in \mathcal{I}} c_i \bv (\phi_i)  , c_i \in \mathbb{R}, \phi_i \in [0,1] \},
\end{equation}
and
\begin{equation} \label{eq:ctsshift_form2}
\begin{aligned}
\inf ~ \{  z_+ + z_- :
& \bx =  
\left( \begin{array}{c}
z_+ \\ \bz_{+}
\end{array}\right) - 
\left( \begin{array}{c}
z_{-} \\ \bz_{-}
\end{array}\right),  \\
&\left( \begin{array}{cc}
z_+ & \bz_{+}^{\dagger} \\ \bz_{+} & \star 
\end{array} \right),
\left( \begin{array}{cc}
z_- & \bz_{-}^{\dagger} \\ \bz_{-} & \star 
\end{array} \right) \emph{ are PSD Toeplitz}
  \}.
\end{aligned}
\end{equation}
Here, $\mathcal{I}$ is an arbitrary index set, $z_+, z_- \in \mathbb{R}$, $\bz_+,\bz_- \in \mathbb{C}^{d}$, and $\star \in \mathbb{C}^{d \times d}$ denotes an arbitrary matrix.  
\end{proposition}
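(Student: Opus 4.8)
The plan is to read the semidefinite description \eqref{eq:ctsshift_form2} through the Vandermonde (Carath\'eodory--Fej\'er) decomposition of positive semidefinite Hermitian Toeplitz matrices, which asserts that a Hermitian Toeplitz matrix $T \in \C^{(d+1)\times(d+1)}$ is PSD if and only if it admits a representation $T = \sum_{k} p_k \, \bv(\phi_k)\bv(\phi_k)^{\dagger}$ with weights $p_k > 0$ and frequencies $\phi_k \in [0,1)$, where $\bv(\phi) = (1, e^{i 2\pi \phi}, \ldots, e^{i 2\pi d\phi})^{\intercal}$. The bridge between the two formulations is the elementary observation that, because the leading entry of $\bv(\phi)$ equals $1$, the rank-one matrix $\bv(\phi)\bv(\phi)^{\dagger}$ is Hermitian Toeplitz with leading diagonal entry $1$ and first column exactly $\bv(\phi)$. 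Consequently, for any positive combination $T = \sum_k p_k \bv(\phi_k)\bv(\phi_k)^{\dagger}$, the leading diagonal entry equals $\sum_k p_k$ and the first column equals $\sum_k p_k \bv(\phi_k)$.

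First I would show that the value of \eqref{eq:ctsshift_form2} is at most that of \eqref{eq:ctsshift_form1}. Given any feasible representation $\bx = \sum_{i \in \mathcal{I}} c_i \bv(\phi_i)$ with $c_i \in \R$ (which, by Carath\'eodory's theorem applied in the ambient space $\R \times \C^{d} \cong \R^{2d+1}$, may be taken finite), I split the index set by the sign of $c_i$ and set the vectors $(z_+, \bz_+)^{\intercal} = \sum_{c_i > 0} c_i \bv(\phi_i)$ and $(z_-, \bz_-)^{\intercal} = \sum_{c_i < 0} (-c_i) \bv(\phi_i)$, so that $\bx = (z_+, \bz_+)^{\intercal} - (z_-, \bz_-)^{\intercal}$. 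Defining $T_{\pm} = \sum_{\pm c_i > 0} |c_i| \, \bv(\phi_i)\bv(\phi_i)^{\dagger}$ produces two matrices that are PSD (positive sums of rank-one PSD matrices) and Toeplitz (each summand is Toeplitz), whose leading entries are $z_{\pm}$ and whose first columns are $(z_{\pm}, \bz_{\pm})^{\intercal}$; hence they are feasible for \eqref{eq:ctsshift_form2}. Since the leading entry of each $\bv(\phi)$ is $1$, we have $z_+ + z_- = \sum_i |c_i|$, so the objectives match exactly and taking the infimum over representations gives the claimed inequality.

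Next I would establish the reverse inequality. Taking any feasible point of \eqref{eq:ctsshift_form2}, I apply the Vandermonde decomposition to each of the two PSD Toeplitz matrices, writing them as $\sum_k p_k^{\pm}\, \bv(\phi_k^{\pm})\bv(\phi_k^{\pm})^{\dagger}$ with $p_k^{\pm} > 0$. Reading off the leading entries and first columns via the observation above yields $z_{\pm} = \sum_k p_k^{\pm}$ and $(z_{\pm}, \bz_{\pm})^{\intercal} = \sum_k p_k^{\pm} \bv(\phi_k^{\pm})$, so that $\bx = \sum_k p_k^{+}\bv(\phi_k^{+}) - \sum_k p_k^{-}\bv(\phi_k^{-})$ is a representation of the form required in \eqref{eq:ctsshift_form1} whose objective $\sum_k p_k^{+} + \sum_k p_k^{-}$ equals $z_+ + z_-$. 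Taking infima in both directions then forces equality of the two minimal values.

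The main obstacle is securing the nontrivial direction of the Toeplitz characterization used in the second step, namely that \emph{every} PSD Hermitian Toeplitz matrix is a nonnegative combination of the rank-one atoms $\bv(\phi)\bv(\phi)^{\dagger}$; the converse, that such combinations are PSD Toeplitz, is immediate and is all that the first step requires. I would invoke the classical Carath\'eodory--Fej\'er theorem for this rather than reprove it. A secondary point needing care is the reduction to finite representations in \eqref{eq:ctsshift_form1}, together with the reality of the leading entries $z_{\pm}$ (forced by Hermiticity and consistent with $\bx \in \R \times \C^{d}$), but these are routine once the decomposition is in hand.
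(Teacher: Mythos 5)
Your proof is correct and follows essentially the same route as the paper's: both directions hinge on the correspondence between nonnegative combinations of the rank-one atoms $\bv(\phi)\bv(\phi)^{\dagger}$ and PSD Hermitian Toeplitz matrices, with the Carath\'eodory--Fej\'er/Vandermonde decomposition supplying the nontrivial direction. If anything, your bookkeeping of the leading diagonal entries ($z_{\pm} = \sum_k p_k^{\pm}$, so the two objectives match exactly) is slightly more careful than the paper's write-up, which has a small slip at the corresponding step.
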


Proposition \ref{thm:ctsshift_sdp} is based on a similar description for an atomic norm for signals that possess continuous shift invariance \emph{and} phase invariance \cite{TBSR:13}.  The proof of Proposition \ref{thm:ctsshift_sdp} relies on the existence of a Vandermonde decomposition for every PSD Toeplitz matrix, and is likewise based on a similar result in \cite{TBSR:13}.

\begin{proposition}[\cite{CarFej:11,Caratheodory:11,Toeplitz:11} ] \label{thm:vdwfactorization}
Let $X \in \mathbb{C}^{d \times d}$ be a positive semidefinite Toeplitz matrix.  Then $X$ admits a Vandermonde decomposition of the form $X = VDV^{\dagger}$, where $V$ is a $d \times d^{\prime}$ Vandermonde matrix of the form
\begin{equation*}
V = \left(
\begin{array}{cccc}
e^{ i \theta_1 \cdot 0} & e^{i \theta_2 \cdot 0} & \ldots & e^{ i \theta_{d^{\prime}} \cdot 0 } \\ 
e^{ i \theta_1 \cdot 1} & e^{ i \theta_2 \cdot 1} & \ldots & e^{ i \theta_{d^{\prime}} \cdot 1} \\
\vdots & \vdots & & \\
e^{ i \theta_1 \cdot (d-1) } & e^{ i \theta_2 \cdot (d-1) } & \ldots & e^{ i \theta_{d^{\prime}} \cdot (d-1) }
\end{array}
 \right),
\end{equation*}
and $D$ is a diagonal matrix with positive entries. 
\end{proposition}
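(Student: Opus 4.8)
\textbf{Proof proposal for Proposition \ref{thm:vdwfactorization}.}

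The plan is to reduce the factorization $X = VDV^{\dagger}$ to a classical moment/Prony construction, and to isolate positive semidefiniteness as precisely the property that forces the relevant frequencies onto the unit circle. Throughout I would write $X_{kl} = t_{k-l}$ for the Toeplitz entries (with $t_{-m} = t_m^*$, since $X$ is Hermitian), and for $\theta \in [0,2\pi)$ let $\bu(\theta) = (1, e^{i\theta}, \ldots, e^{i\theta(d-1)})^{\intercal}$ denote a generic column of a Vandermonde matrix of the prescribed shape. Since $\bu(\theta)\bu(\theta)^{\dagger}$ is exactly the rank-one Hermitian Toeplitz matrix with $(k,l)$ entry $e^{i\theta(k-l)}$, the target identity $X = VDV^{\dagger} = \sum_j D_{jj}\,\bu(\theta_j)\bu(\theta_j)^{\dagger}$ is equivalent to representing the moment data as $t_m = \sum_j D_{jj}\, e^{im\theta_j}$ for $m = 0,1,\ldots,d-1$, with $D_{jj} > 0$ and distinct $\theta_j$. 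Thus it suffices to (i) locate admissible frequencies $\theta_1,\ldots,\theta_{d'}$ and then (ii) recover positive weights.

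I would first treat the rank-deficient case $r := \mathrm{rank}(X) < d$, which carries the essential content. The first subclaim is that the leading $r \times r$ principal submatrix $X^{(r)}$ is nonsingular; I would establish this through a \emph{persistence lemma} stating that, for a PSD Toeplitz matrix, singularity of a leading principal submatrix propagates to all larger leading submatrices (so rank $r$ forces $X^{(r)}$ invertible). This propagation is where the shift structure of Toeplitz matrices combines with positivity: a kernel vector of $X^{(k)}$ can be padded and shifted, and nonnegativity of the quadratic form pushes the padded vector into the kernel of $X^{(k+1)}$. Granting this, the $(r+1)$-st column of the leading $(r+1)\times(r+1)$ block is a unique linear combination of the first $r$, furnishing coefficients $c_0,\ldots,c_{r-1}$ and a monic annihilating polynomial $p(z) = z^r + c_{r-1} z^{r-1} + \cdots + c_0$. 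The decisive step, and what I expect to be the main obstacle, is to show that the $r$ roots of $p$ are simple and all lie on the unit circle; writing them as $z_j = e^{i\theta_j}$ then produces the frequencies. This is exactly where positive semidefiniteness is indispensable: the coefficients $c_k$ define a linear recurrence which, by the persistence lemma and the Hermitian extension $t_{-m} = t_m^*$, annihilates a bounded doubly-infinite sequence (bounded since the $2\times 2$ minors give $|t_m| \le t_0$); boundedness in both directions both excludes roots off the circle and rules out repeated unimodular roots, which would force polynomial growth.

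Once the distinct frequencies $\theta_1,\ldots,\theta_r$ are pinned down, recovering the weights is routine: the equations $t_m = \sum_{j=1}^{r} D_{jj} e^{im\theta_j}$ for $m = 0,\ldots,r-1$ form a square Vandermonde system in the nodes $z_j = e^{i\theta_j}$, invertible because the $z_j$ are distinct, so the $D_{jj}$ are uniquely determined. I would then verify that these weights are real and strictly positive by testing $X \succeq 0$ against the interpolation vectors adapted to the nodes $z_j$, and check that the recurrence reproduces the higher moments $t_r,\ldots,t_{d-1}$, completing $X = VDV^{\dagger}$ with $d' = r$. For the remaining full-rank case $r = d$, I would reduce to the previous case by a boundary argument: fix any $\theta_0$ and set $X(\epsilon) = X - \epsilon\,\bu(\theta_0)\bu(\theta_0)^{\dagger}$; letting $\epsilon^* = \max\{\epsilon \ge 0 : X(\epsilon) \succeq 0\}$, the matrix $X(\epsilon^*)$ is PSD Toeplitz and singular, so the rank-deficient analysis applies to it, and adding back the atom $\epsilon^*\,\bu(\theta_0)\bu(\theta_0)^{\dagger}$ yields a decomposition of $X$ (here $d' = d$).

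As noted, the crux is the root-location step above, namely proving that the polynomial attached to the kernel of a PSD Toeplitz matrix has simple roots all on the unit circle. An alternative route that sidesteps the explicit recurrence is the extreme-ray approach: view the $d \times d$ PSD Toeplitz matrices as a pointed closed convex cone inside the $(2d-1)$-dimensional real space of Hermitian Toeplitz matrices, show its extreme rays are precisely the rank-one matrices $\lambda\,\bu(\theta)\bu(\theta)^{\dagger}$, and invoke Carath\'eodory's theorem for cones to write any element as a nonnegative combination of finitely many of them. This trades the root-location analysis for the extreme-ray characterization, which is itself the principal difficulty there; I would favor the constructive Prony argument above, since it additionally exhibits the minimal number $d' = r$ of atoms.
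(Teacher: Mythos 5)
First, a point of reference: the paper does not prove this proposition at all --- it is quoted as a classical result of Carath\'eodory--Fej\'er and Toeplitz (the cited 1911 references) and used as a black box inside the proof of Proposition \ref{thm:ctsshift_sdp}. So the comparison is between your sketch and the classical argument, which your outline does reconstruct faithfully in its architecture. The persistence lemma is correct and is proved exactly as you indicate: if $\bv \in \ker X^{(k)}$, then padding $\bv$ with a zero at either end gives $\tilde{\bv}$ with $\tilde{\bv}^{\dagger} X^{(k+1)} \tilde{\bv} = \bv^{\dagger} X^{(k)} \bv = 0$ (the leading-zero case uses the Toeplitz structure, since the trailing principal block equals the leading one), and positive semidefiniteness upgrades this to $X^{(k+1)}\tilde{\bv} = 0$; the $m+1$ independent shifted paddings then force $\mathrm{rank}\, X^{(k+m)} \leq k-1$, so rank $r$ indeed makes $X^{(r)}$ invertible. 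The weight recovery via the square Vandermonde system, positivity of the weights via Lagrange interpolation vectors (with strict positivity forced by the rank count), and the full-rank reduction by subtracting $\epsilon^{\star}\, \bu(\theta_0)\bu(\theta_0)^{\dagger}$ (with $\epsilon^{\star}$ finite since the trace decreases linearly, and any coincidence of $\theta_0$ with an atom of $X(\epsilon^{\star})$ handled by merging weights) are all sound and standard.

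The genuine gap is in the step you yourself flag as the crux, and your one-line justification for it is circular. The bound $|t_m| \leq t_0$ from the $2 \times 2$ minors is available only for the \emph{given} moments, $|m| \leq d-1$. The doubly-infinite sequence your recurrence annihilates is the \emph{recurrence extension} of this finite window, and its boundedness is precisely what is in question: a solution of a linear recurrence whose initial data are bounded on a window of length $d$ can perfectly well grow off that window if the characteristic roots are off the circle or repeated. What the classical proofs supply here, and what your sketch is missing, is an extension lemma: a singular PSD Toeplitz matrix extends (uniquely, in fact) to a singular PSD Toeplitz matrix of every larger size with the same rank $r$, the new entry being defined by the recurrence and positive semidefiniteness verified by a Schur-complement or rank argument. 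Only \emph{after} that lemma do the $2\times 2$ minors of the extended matrices give $|t_m| \leq t_0$ for all $m \in \mathbb{Z}$, whereupon boundedness on $\mathbb{Z}$ kills non-unimodular and repeated active roots, and the rank count forces all $r$ roots to be active, hence simple and unimodular. Without this lemma the decisive step is asserted rather than proved. Your alternative extreme-ray route is legitimate but, as you concede, relocates rather than removes the difficulty: characterizing the extreme rays of the PSD Toeplitz cone as $\{\lambda\, \bu(\theta)\bu(\theta)^{\dagger}\}$ is essentially equivalent to the root-location fact. In short: correct skeleton, standard constructive route, but the heart of the proof is still open as written.
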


\begin{proof}[Proof of Proposition \ref{thm:ctsshift_sdp}]
Let $\mathrm{OPT}_1$ be the optimal value to \eqref{eq:ctsshift_form1}, and let $\mathrm{OPT}_2$ be the optimal value to \eqref{eq:ctsshift_form2}.

We begin by showing that $\mathrm{OPT}_1 \geq \mathrm{OPT}_2$.  Suppose $\bx = \sum_{j \in \mathcal{J}} c_j \bv(\phi_j) - \sum_{k \in \mathcal{K}} d_k \bv(\phi_k)$ where $c_j,d_k > 0$ for all $j \in \mathcal{J}, k \in \mathcal{K}$.  Construct the matrices $Z_+ = \sum_{j \in \mathcal{J}} c_j \bv(\phi_j) \bv(\phi_j)^{\dagger}$ and $Z_- = \sum_{k \in \mathcal{K}} d_k \bv(\phi_k) \bv(\phi_k)^{\dagger}$.  It is easy to see that $Z_+$ and $Z_-$ are PSD Toeplitz, and that the first column of $Z_+ - Z_-$ is precisely $\bx$; i.e., $Z$ is a feasible matrix in \eqref{eq:ctsshift_form2}.  By taking the infimum over all possible decompositions of the form $\bx = \sum_{j \in \mathcal{J}} c_j \bv(\phi_j) - \sum_{k \in \mathcal{K}} d_k \bv(\phi_j)$, it follows that $\mathrm{OPT}_1 \geq \mathrm{OPT}_2$.

Next, we show that $\mathrm{OPT}_2 \geq \mathrm{OPT}_1$.  Let $Z_+$ and $Z_-$ be the respective PSD Toeplitz matrices in \eqref{eq:ctsshift_form2}.  By Proposition \ref{thm:vdwfactorization}, $Z_+$ and $Z_-$ admit a Vandermonde decompositions of the form $Z_+ = \sum_{j \in \mathcal{J}} c_j \bv(\phi_j) \bv(\phi_j)^{\dagger}$ and $Z_- = \sum_{k \in \mathcal{K}} c_k \bv(\phi_k) \bv(\phi_k)^{\dagger}$.  Note that we have $\bx = \sum_{j \in \mathcal{J}} c_j \bv(\phi_j) - \sum_{k \in \mathcal{K}} d_k \bv(\phi_j)$.  By taking the infimum over all feasible solutions to \eqref{eq:ctsshift_form2}, and by noting that $z = \sum c_j - \sum d_k$, it follows that $\mathrm{OPT}_2 \geq \mathrm{OPT}_1$.
\end{proof}

Let $\mathbb{W}^{2d+1} = F(\mathbb{R}^{2d+1})$ be the image of the real vector space under $F$.  One can verify $ \mathbb{W}^{2d+1} = \{ \, (f_{-d},\ldots,f_{-1},f_0,f_{1},\ldots,f_d)  \,:\, f_i = (f_{-i})^{\ast}, \, -d \leq i \leq d \, \} \subset \mathbb{C}^{2d+1}$.  Then the linear span of $\mathrm{conv}(\grp_{\mathrm{cts}})$, after a change of basis by the unitary transformation $F$, are diagonal matrices whose diagonal are in $\mathbb{W}^{2d+1}$.  The following result summarizes the description of the atomic norm.
\begin{proposition} \label{thm:atomicnorm_fullsdp}
Let $\bx \in \mathbb{R}^{2d+1}$.  Then
\begin{equation*}
\begin{aligned}
\| \bx \|_{\grp_{\mathrm{cts}}} ~&=~ \underset{z_{+} , z_{-} \in \R, ~ \bz_{+},\bz_{-} \in \mathbb{C}^{d}}{\inf}~~  z_{+} + z_{-} \\ 
\mathrm{s.t.} ~~&  F \bx = \left( \begin{array}{c}  \bz_{+}^{\ast} \\ z_+ \\ \bz_{+} \end{array}\right)
- \left( \begin{array}{c} \bz_{-}^{\ast} \\ z_- \\ \bz_{-} \end{array}\right) \\
& \left( 
\begin{array}{cc}
z_{+} & \bz_{+}^{\dagger} \\
\bz_{+} & \star
\end{array}
\right),
 \left( 
\begin{array}{cc}
z_{-} & \bz_{-}^{\dagger} \\
\bz_{-} & \star
\end{array}
\right) \emph{are PSD Toeplitz} .
\end{aligned}
\end{equation*}
\end{proposition}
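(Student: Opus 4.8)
The plan is to derive Proposition \ref{thm:atomicnorm_fullsdp} as a corollary of Proposition \ref{thm:ctsshift_sdp} by composing two changes of variable: a unitary reduction that simultaneously diagonalizes the group, followed by a conjugate-symmetry reduction that passes from two-sided to one-sided Fourier atoms. The substantive analytic content—the PSD-Toeplitz characterization resting on the Vandermonde decomposition of Proposition \ref{thm:vdwfactorization}—is already packaged inside Proposition \ref{thm:ctsshift_sdp}, so the work here is to set up the identifications so that invoking it produces exactly the stated SDP.

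First I would pin down the meaning of $\| \bx \|_{\grp_{\mathrm{cts}}}$ for a vector. Every $Z \in \mathrm{Span}(\grp_{\mathrm{cts}})$ is simultaneously diagonalized by $F$ and is therefore determined by its first column, so (following the same convention used in the convolutional case of Section \ref{sec:example_conv}) I identify $Z$ with the real vector $\bx = Z e_0 \in \R^{2d+1}$. Under this identification each generator $F^{\dagger} L(\phi) F$ corresponds to the atom $\ba_\phi = F^{\dagger} L(\phi) F e_0$, and a direct computation gives $F \ba_\phi = \frac{1}{\sqrt{2d+1}} l(\phi)$, since $F e_0$ is a constant vector and $L(\phi)$ is the diagonal matrix with diagonal $l(\phi)$. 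Invoking the elementary fact that $\mathrm{conv}(F \mathcal{A}) = F\,\mathrm{conv}(\mathcal{A})$ for the unitary $F$, hence $\| \bx \|_{\mathcal{A}} = \| F \bx \|_{F \mathcal{A}}$, I reduce the matrix-domain norm to the vector atomic norm of $F \bx$ over the atom set $\{ l(\phi) \}$ taken with \emph{real} coefficients (the $\pm$ needed for central symmetry). This is precisely the assertion, stated just before the proposition, that $\mathrm{conv}(\grp_{\mathrm{cts}})$ is unitarily equivalent to diagonal matrices with diagonal in $\mathbb{W}^{2d+1}$.

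Next I would exploit conjugate symmetry. Because the coefficients are real and $(l(\phi))_{-k} = (l(\phi))_{k}^{\ast}$, any real combination $\sum_i c_i l(\phi_i)$ is conjugate-symmetric, i.e. lies in $\mathbb{W}^{2d+1}$; conversely such a vector is determined by its entries indexed $0,1,\ldots,d$. The map sending a conjugate-symmetric vector to its upper half $(z;\bz) \in \R \times \C^{d}$ is a linear bijection carrying $l(\phi)$ to $\bv(\phi)$ (after the reparameterization identifying $\omega^{k\phi}$ with $e^{2\pi i k\phi}$). Hence the atomic norm of $F \bx$ over $\{ l(\phi) \}$ equals the real-coefficient atomic norm of the upper half of $F \bx$ over $\{ \bv(\phi) \}$, which is exactly the quantity \eqref{eq:ctsshift_form1}. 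Applying Proposition \ref{thm:ctsshift_sdp} replaces this by the program \eqref{eq:ctsshift_form2}, and restoring the discarded lower half by conjugation turns the constraint $(z;\bz) = (z_+;\bz_+) - (z_-;\bz_-)$ into the stated $F \bx = (\bz_+^{\ast};z_+;\bz_+) - (\bz_-^{\ast};z_-;\bz_-)$, yielding the SDP in the proposition.

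I expect the main obstacle to be bookkeeping rather than any hard estimate. The two points that require care are: (i) tracking the normalization hidden in $F$—the $1/\sqrt{2d+1}$ factor and the $\omega^{k\phi}$ versus $e^{2\pi i k\phi}$ parameterization—so that no stray scalar survives into the final norm; and (ii) confirming that restricting to conjugate-symmetric, real-coefficient decompositions is lossless. The latter holds because the upper-half map is a bijection onto $\R \times \C^{d}$, and because admitting both signs of $\bv(\phi)$ is exactly what the two separate $+$/$-$ PSD Toeplitz blocks encode; this is what makes the expression a genuine norm even though $-I$ does not literally lie in $\grp_{\mathrm{cts}}$.
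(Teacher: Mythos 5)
Your proposal is correct and follows exactly the route the paper intends: the paper states Proposition \ref{thm:atomicnorm_fullsdp} without a written proof, presenting it as an immediate consequence of the unitary change of basis by $F$ (reducing $\mathrm{conv}(\grp_{\mathrm{cts}})$ to diagonal matrices with diagonal in $\mathbb{W}^{2d+1}$), the conjugate-symmetry identification of $\mathbb{W}^{2d+1}$ with $\R\times\C^{d}$ carrying $l(\phi)$ to $\bv(\phi)$, and Proposition \ref{thm:ctsshift_sdp}. Your write-up simply makes those implicit steps explicit, and your cautions about the normalization of $F$ and the $\pm$ symmetrization substituting for Assumption (A3) are exactly the right points of care.
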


\subsubsection{Implementation details} \label{sec:ctsshift_implement}

In the following, we explain how Proposition \ref{thm:atomicnorm_fullsdp} is applied within our dictionary learning framework.  Let $\by \in \R^{2d+1}$ be the data vector and let $\{\ba_i\}_{i=1}^{q} \in \R^{2d+1}$ be generators for the dictionary.  Let $\tilde{\by}:= F \by$ and $\tilde{\ba}_i := F \ba_i$ be the data and the generator expressed in the transformed basis, and let $\odot$ denote pointwise multiplication.  Then the loss can be re-written as follows: $\| \by - \sum_{i} \left( F^{\dagger} L(\phi) F \right) \ba_i \|_2^2 = \| (F \by) - \sum_{i} L(\phi) (F \ba_i) \|_2^2 =  \| \tilde{\by} - \sum_{i} l(\phi) \odot \tilde{\ba}_i \|_2^2$.

Following Proposition \ref{thm:atomicnorm_fullsdp}, the step in which we represent $\by$ as succinctly with respect to $\mathcal{D}$ as in \eqref{eq:atomicnorm_sparsecoding} is given by:
\begin{equation} \label{eq:ctsshiftsparsecoding}
\begin{aligned}
& \underset{\substack{ z_{i,+},z_{i,-} \in \R,\\ \bz_{i,+},\bz_{i,-} \in \mathbb{C}^{d}}}{\mathrm{argmin}} \frac{1}{2} \| \tilde{\by} - \sum_{i=1}^{q} \tilde{\ba}_i \odot (
\left(\begin{array}{c}
\bz_{i,+}^{\ast} \\ z_{i,+} \\ \bz_{i,+} 
\end{array}\right)
-
\left(\begin{array}{c}
\bz_{i,-}^{\ast} \\ z_{i,-} \\ \bz_{i,-}
\end{array}\right)
) \|_2^2  + \lambda \sum_{i=1}^{q} (z_{i,+} + z_{i,-}) \\
&  \mathrm{s.t.}
\left( \begin{array}{cc}
z_{i,+} & \bz_{i,+}^{\dagger} \\
\bz_{i,+} & \star
\end{array}\right),
\left( \begin{array}{cc}
z_{i,-} & \bz_{i,-}^{\dagger} \\
\bz_{i,-} & \star
\end{array}\right) \text{are PSD Toeplitz}
\end{aligned}.
\end{equation}

\subsection{Integer Shift Invariance with Interpolation} \label{sec:example_convinterp}

In our fourth example, we expand on the techniques in Section \ref{sec:example_ctsshift} to describe an approach for learning continuously shift invariant dictionaries using interpolation.  Our discussion is conceptually identical to the framework proposed by Song, Flores, and Ba \cite{SFB:19}.

More concretely, suppose we restrict the shift parameter $\phi$ in \eqref{eq:mastergrp} to be integer multiples of $1/((2d+1)K)$ instead of all values in the interval $[0,1)$, as in Section \ref{sec:example_ctsshift}.  Here, $K$ is an integer parameter, and it denotes the number of subdivisions within a single integer shift.  Then, integer shift invariance can be viewed as one extreme where $K=1$, and continuous shift invariance can be viewed as the other extreme where $K = \infty$.  We then have $\grp_{\mathrm{interp}} = \{ (F^{\dagger} L(\phi) F) : \phi = k/((2d+1)K), 0 \leq k \leq ((2d+1)K)-1\}$.  The resulting atomic norm is polyhedral, and one can show that the resulting DL problem is equivalent to the following
\begin{equation*} 
\begin{aligned}
\underset{ A , \bx^{(i)} }{\arg \min} & ~ \sum_{i=1}^{n} \left( \frac{1}{2} \| \by^{(i)} - A \bx^{(i)} \|_2^2 + \lambda \|\bx^{(i)}\|_{1}\right) .
\end{aligned}
\end{equation*}
Here, the linear map $A$ comprises all interpolated shifts of the dictionary generators, and is subsequently of size $d$ by $(q \times (2d+1) \times K)$.

The relationship between Convolutional DL, the interpolated variant we describe here, and continuously shift invariant DL (Section \ref{sec:example_ctsshift}) suggests the following rule of thumb:  For small $K$, it is preferable to apply the interpolated variant as the sub-routine \eqref{eq:atomicnorm_sparsecoding} entails solving a moderately larger LP instead of a SDP.  For large $K$ however, it is preferable to learn a continuously shift invariant dictionary because the size of the convex program associated to \eqref{eq:atomicnorm_sparsecoding} does not grow with $K$ (i.e., it does not grow with finer discretizations).  

Finally, we note that the framework proposed by Song, Flores, and Ba \cite{SFB:19} is far more general; in particular, it permits more general interpolation methods as well as signal reconstruction methods.  Our current discussion is simply an adaptation of their ideas, and we do so in order to seamlessly describe the relationship between our work and theirs.

\subsection{Invariance to Orthogonal Transformations} \label{sec:example_orthogonal}

In our fifth example, the data are \emph{matrices} $\{Y^{(i)}\}_{i=1}^{n} \subset \mathbb{R}^{d \times r}$, and our goal is to learn a dictionary that is invariant under \emph{orthogonal transformations}.  The generators $\{A_j\}_{j=1}^{q}$ are matrices of dimensions $d \times r$, and the group $\mathfrak{G}_{\mathrm{orth}} = \{ Q : Q \in SO(d) \}$ is the collection of orthogonal matrices acting by left multiplication; that is, $Q A$ is a dictionary element for any orthogonal $Q$ if $A$ is.  To learn a dictionary that is invariant to orthogonal shifts, we minimize the following
\begin{equation} \label{eq:sync_gidl}
\underset{ Z^{(i)}_j, A_j }{\arg \min}  ~ \sum_{i=1}^{n} \frac{1}{2} \| Y^{(i)} - \sum_j Z^{(i)}_j A_j \|_2^2 + \lambda \cdot \sum \|Z^{(i)}_j\|_{\mathfrak{G}_{\mathrm{orth}}} .
\end{equation}
The atomic norm $ \|\cdot\|_{\mathfrak{G}_{\mathrm{orth}}}$ induced by the collection of orthogonal matrices is the \emph{spectral} norm.  To compute the proximal map of a matrix $Z$ with respect to the spectral norm and parameter $\lambda$, we perform the following operations to its singular value decomposition (SVD): (i) replace the largest singular value $s_{\max}$ by $\tilde{s} := \max \{ 0, s_{\max} - \lambda \}$, and (ii) replace all other singular values by $\tilde{s}$ if it exceeds $\tilde{s}$.

\subsection{Algorithmic Simplifications}

We describe a first order method for minimizing \eqref{eq:atomicnorm_sparsecoding} for the specific instances where the invariance is $\grp_{\mathrm{cts}}$.  We recognize that the minimization instance is a SDP, and hence can be solved using standard software \cite{Ren:01,NesNem:94}.  Nevertheless, we believe that there is value in stating our algorithm because it exploits certain structural properties of Hermitian and Toeplitz matrices, and it relies on very basic linear algebraic computations.  In the following, we let $\mathbb{H}^{d}$ denote the set of $d \times d$ complex Hermitian matrices, and we let $\mathbb{H}^{d}_{+}$ denote the set of $d \times d$ complex Hermitian PSD matrices.  In addition, we let $\mathbb{T}^{d}$ denote the set of $d\times d$ complex Hermitian Toeplitz matrices.  

We proceed by discussing the simplifications for $\grp_{\mathrm{cts}}$.  First, we express \eqref{eq:ctsshiftsparsecoding} in following form for some $f$
\begin{equation} \label{eq:generic_fomshape}
\min_{Z \in \mathbb{H}^{d}} \quad f(Z) \qquad \text{s.t.} \qquad Z \in \mathbb{H}^{d}_{+} \cap \mathbb{T}^{d}.
\end{equation}
Our algorithm is based on projected gradient descent in which we alternate between taking steps in the negative direction of the gradient and a projection onto the subset $\mathbb{H}^{d}_{+} \cap \mathbb{T}^{d}$.  We perform the latter step using an alternating projections-based method in which we alternate between applying projection operations onto the sets $\mathbb{H}^{d}_{+}$ and $\mathbb{T}^{d}$ \cite{BD:86,SufHay:93}.

Projected gradient descent-based methods are most effective whenever the projection step is simple to compute.  While such methods are most typically applied whenever the projection step is expressible via a closed-form expression, we are not aware if projections onto $\mathbb{H}^{d}_{+} \cap \mathbb{T}^{d}$ can be expressed as such.  Instead, we settle on a slightly more expensive operation based on alternating projections because projections onto the sets $\mathbb{H}^{d}_{+}$ and $\mathbb{T}^{d}$ only require simple primitives.  More specifically, let $X$ be a Hermitian matrix with eigendecomposition $U D U^{\dagger}$.  The projection of $X$ onto $\mathbb{H}^{d}_{+}$ is the matrix $U D_{+} U^{\dagger}$, where $D_{+}$ is the diagonal matrix obtained as the pointwise maximum between $D$ and the zero matrix \cite{Higham:88}.  The projection of $X$ onto $\mathbb{T}$ is a \emph{linear} operation as $\mathbb{T}$ is a \emph{subspace}.  The procedure for computing projections onto $\mathbb{H}^{d}_{+} \cap \mathbb{T}^{d}$ is based on a more general procedure for computing projections onto the intersection of two convex sets using projections of each of these convex sets as basic primitives.  The more general algorithm was first proposed by Boyle and Dykstra \cite{BD:86}, and subsequently adapted to our set-up by Suffridge and Hayden \cite{SufHay:93}.  We summarize the procedure for computing projections onto $\mathbb{H}^{d}_{+} \cap \mathbb{T}^{d}$ in Algorithm \ref{alg:proj_PSDTOEP}, and we summarize the full procedure for minimizing \eqref{eq:ctsshiftsparsecoding} in Algorithm \ref{alg:compute_ctsSDP}.

\begin{algorithm}[t]
    \caption{Algorithm for computing projection of $X$ onto $\mathbb{H}^{d}_{+} \cap \mathbb{T}^{d}$}
    \label{alg:proj_PSDTOEP}
	\textbf{Input}: A complex Hermitian $d\times d$ matrix $X$.  Initialize $d \times d$ Hermitian matrices $Y,Q = 0$. \\
	\textbf{Algorithm}: Repeat until success iterates differ by at most $\epsilon$. \\
	\textbf{1.} Project $X+P$ onto $\mathbb{H}^{d}_{+}$: \\
	\textbf{-- a.} Compute eigendecomposition $U D U^{\dagger} \leftarrow X+P$. \\
	\textbf{-- b.} Compute pointwise maximum $D \leftarrow \max \{ D,0 \}$. \\
	\textbf{-- c.} Update $Y \leftarrow U D U^{\dagger}$. \\
	\textbf{2.} Update $P \leftarrow X+P-Y$ \\
	\textbf{3.} Project $Y+Q$ onto $\mathbb{T}$: \\
    \textbf{-- a.} Set $X$ to be the Toeplitz matrix with $ X_{1,k} \leftarrow \frac{1}{d-k+1}\sum_{i=1}^{d-k+1} (Y+Q)_{i,k+i-1} + \frac{1}{d-k+1}\sum_{i=1}^{d-k+1} (Y+Q)_{k+i-1,i}^*$, $ 1 \leq k \leq d$ , and
$X_{k,1} \leftarrow \frac{1}{d-k+1}\sum_{i=1}^{d-k+1} (Y+Q)_{i,k+i-1}^* + \frac{1}{d-k+1} \sum_{i=1}^{d-k+1} (Y+Q)_{k+i-1,i}$ , $1 \leq k \leq d$.\\
    \textbf{4.} Update $Q \leftarrow Y+Q-X$ \\
	\textbf{Output}: $X$. 
\end{algorithm}

\begin{algorithm}[t]
    \caption{Algorithm for minimizing \eqref{eq:ctsshiftsparsecoding}}
    \label{alg:compute_ctsSDP}
    \textbf{Algorithm}: Repeat until success iterates differ by at most $\epsilon$ \\
	\textbf{Input}: Step size parameter $\eta > 0$.  Initial matrix estimate $X = 0$. \\
    \textbf{1.} Take negative gradient descent step: $X \leftarrow X - \eta \nabla f(X)$ \\
	\textbf{2.} Compute projection of $X$ onto $\mathbb{H}^{d}_{+} \cap \mathbb{T}^{d}$ using Algorithm \ref{alg:proj_PSDTOEP} \\
	\textbf{Output}: $X$. 
\end{algorithm}

\textbf{Computational complexity.}  The optimization instance in \eqref{eq:ctsshiftsparsecoding} is a SDP of size $O(qd) \times O(qd)$.  Using the procedure in Algorithm \ref{alg:compute_ctsSDP}, the most expensive sub-routine is to project onto $\mathbb{H}^{d}_{+}$, which has complexity $O(d^3)$ using regular matrix eigenvalue decompositions.  Consequently, the overall complexity cost per inner loop iteration is $O(q d^3)$, and the overall complexity of the proposed first order method for solving \eqref{eq:ctsshiftsparsecoding} has complexity $O(q d^3 m_{1} m_{2})$, where $m_{1}$ and $m_{2}$ are the number of inner and outer loops in Algorithm \ref{alg:proj_PSDTOEP} respectively.
\section{Numerical Experiments} \label{sec:numexp}

In this section, we apply our framework over a series of numerical experiments using synthetically generated data as well as real data.  We discuss the utility as well as the limitations of our framework.

\textbf{Dictionary distances.}  We use the following measure to describe the distance between two dictionaries.  Let $\mathcal{D}$ be a dictionary whose generators are $\mathcal{A} = \{ \ba_{1},\ldots, \ba_{q} \}$, and let $\mathcal{D}^{\prime}$ be another arbitrary dictionary.  We define distance between $\mathcal{D}$ and $\mathcal{D}^{\prime}$ in terms of the mean squared error between every generator in $\mathcal{A}$ from an element in $\mathcal{D}^{\prime}$:
\begin{equation} \label{eq:dictdistance}
\mathrm{dist}(\mathcal{D},\mathcal{D}^{\prime}) = \frac{1}{q} \sum_{i=1}^{q} \inf_{\bd \in \mathcal{D}^{\prime}} \| \ba_i - \bd \|_2^2. 
\end{equation}

\textbf{Implementation details.}  We briefly describe the implementation details used in the numerical experiments.  We solve the step corresponding to \eqref{eq:atomicnorm_sparsecoding} via a first-order method (we provide details shortly).  To ensure that our comparison of different dictionary learning frameworks is fair, the first-order methods we deploy are all comparable, and the number of iterations is always set equal to $5$.  Our rule of thumb for choosing the regularization parameter $\lambda$ is to select it as large as possible while ensuring that the matrices $Z_j^{(i)}$'s are not degenerate.  The intuition is to maximize the impact of the structure inducing penalty terms.  In particular, this rule of thumb appears to be most useful in real data in that smaller choices of $\lambda$ (than those specified by our rule of thumb) tend to learn less meaningful dictionary elements.

Our implementation of Regular DL follows the description in Section \ref{sec:example_regular}, and specifically the formulation in \eqref{eq:reg_dl}.  We solve the sub-routine corresponding to \eqref{eq:atomicnorm_sparsecoding} via a first-order method in which we alternate between a gradient step (with a line search) followed by a proximal step with respect to the induced atomic norm (i.e. the L1-norm).  

Our implementation of integer shift invariant / Convolutional DL follows the description in Section \ref{sec:example_conv}.  We solve the sub-routine corresponding to \eqref{eq:atomicnorm_sparsecoding} using the same procedure as in Regular DL.

Our implementation of continuous shift invariant dictionary learning follows the description in Section \ref{sec:example_ctsshift}.  We solve the sub-routine corresponding to \eqref{eq:atomicnorm_sparsecoding} using projected gradient descent (with a line search) as described in Section \ref{sec:ctsshift_implement}.  We apply $5$ outer iterations of the projected gradient descent, and in each iteration we apply a single iteration in the inner loop (where we project a collection of matrices to be PSD Toeplitz).
\subsection{Incorporating Invariances} \label{sec:shiftinvariantdiscrete}

Our first example is on synthetic data and it expands on the experiment described in the Introduction.

\textbf{Signal model.}  We draw $q=3$ unit Euclidean-norm $\{ \ba^{\star}_j : 1 \leq j \leq q \} \subset \R^{d}$ with $d=30$ from the uniform measure as our generators.  We define the dictionary $\mathcal{D}^{\star} = \{ T^{r} \cdot \ba^{\star}_j : \bd_j \in \R^d, 1 \leq j \leq q, 0 \leq r \leq d-1 \}$ to include all possible integer shifts.  We generate $n=10000$ data-points according to the following model:
\begin{equation*}
	\by^{(i)} = \sum_{j=1}^{s} c^{(i)}_j \bd^{(i)}_j, \qquad \bd_j \in \mathcal{D}^{\star}.
\end{equation*}
Here, the coefficients $c_j \sim \mathcal{N}(0,1)$ are i.i.d. normal random variables, and the dictionary elements $\bd_j$ are chosen from the dictionary $\mathcal{D}^{\star}$ uniformly at random (u.a.r.).  We choose the sparsity parameter $s=5$.

\textbf{Integer shift invariance.}  First, we apply Convolutional DL as described in Section \ref{sec:example_conv} on a subset of only $1000$ data-points.  We apply $50$ iterations and we supply the choice of regularization parameter $\lambda =0.4$ as well as the correct number of generators $q=3$.  
We plot the error between each iterate and the true dictionary $\mathcal{D}^{\star}$ in Figure \ref{fig:ComparisonRecovery} (see the dashed lines in both plots).  We repeat this experimental set-up over $10$ different random initializations.  In all instances, we observe that our algorithm recovers the underlying dictionary with an error of approximately $0.05$.

\textbf{Comparison with Regular Dictionary Learning.}  Second, we compare the results with Regular DL as described in Section \ref{sec:example_regular}.  We specify the number of generators to be equal to $3 \times 30$, which is actual number of dictionary elements, and we perform $100$ iterations.  In the left sub-plot of Figure \ref{fig:ComparisonRecovery} we supply the same $1000$ data-points, and we observe that in the algorithm recovers a dictionary with error $\approx 0.2$ from the underlying dictionary.  In the right sub-plot of Figure \ref{fig:ComparisonRecovery} we supply all $10000$ data-points, and we observe that the error improves to $0.1$ in the latter set-up, which is still poorer than the results obtained using Convolutional DL using \emph{a tenth} of the dataset.

These results emphasize the importance of incorporating the appropriate invariant structure, particularly in settings where data is limited.  A plausible explanation for this phenomenon is that these invariances help reduce the degrees of freedom in the estimation problem significantly.

\begin{figure}
	\centering
	\includegraphics[width=0.45\textwidth]{images/Versus_GIDLCDL_ErrOracle}
	\includegraphics[width=0.45\textwidth]{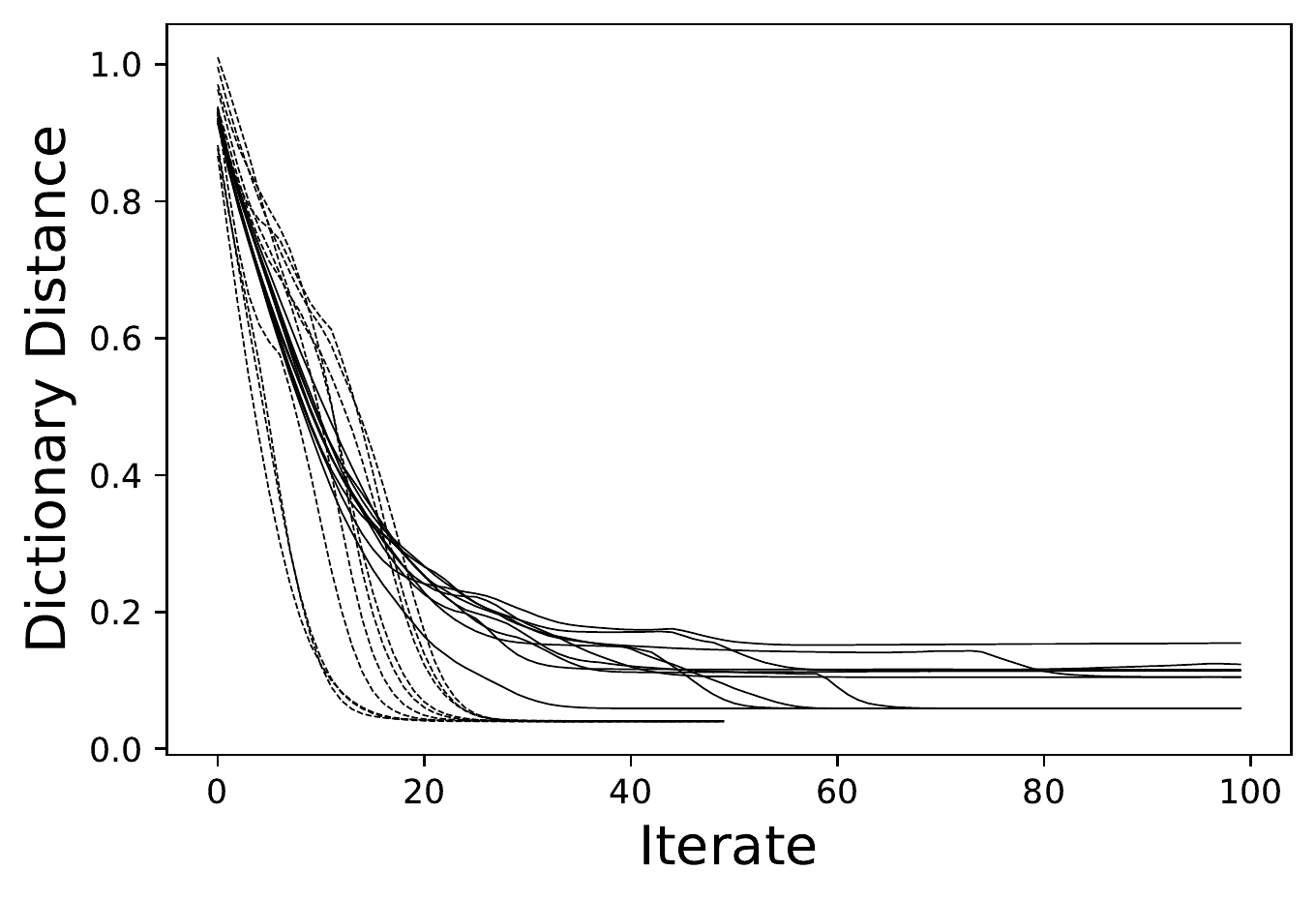}
	\caption{Comparison of learning a shift invariant dictionary using an algorithm that incorporates shift invariance as a prior (our framework -- dashed lines) with an algorithm that does not (Regular DL -- solid lines).  In the left sub-plot we compare our framework with regular dictionary learning on the same dataset comprising $1000$ data-points; in the right sub-plot we compare our framework using $1000$ data-points with Regular DL applied over $10000$ data-points.}
	\label{fig:ComparisonRecovery}
\end{figure}
\subsection{Learning Shift Invariant Dictionaries for ECG Data} \label{sec:ecg_ctsshift}

We apply our framework in Section \ref{sec:example_ctsshift} to learn a shift invariant dictionary for an ECG time series.  The dataset $\{ \by^{(j)} \}_{j=1}^{n} \subset \R^{201}$ contains $n=1000$ time series of length $201$ segmented from a longer time series, which is an ECG signal sampled at 360Hz obtained from the MIT-BIH Arrhythmia Database (signal 100.dat) \cite{mitbih-physionet:01,physionet:00}.  We subtract an offset from each time series so that it is zero mean, and we scale the resulting vector to be unit-norm.  Figure \ref{fig:SampleECGa} shows a subset of our dataset $\{ \by^{(j)} \}_{j=1}^{n}$, and Figure \ref{fig:SampleECGb} shows a segment of the longer time series.

We apply our method to learn a dictionary that is continuously shift invariant with $q \in \{ 1,2,3,4 \}$ generators.  We apply $20$ iterations of our method.  For $q \in \{1,2\}$ we pick $\lambda = 0.2$, and for $q \in \{3,4\}$ we pick $\lambda = 0.1$ as our choices of regularization parameter.   In Figure \ref{fig:CtsShift_Range}, we show the realizations of continuous shifts applied to the generators learned from the instance where $q=4$.

We show the generators obtained from our method in Figure \ref{fig:CtsShift_ECG}.  We observe that the generators resemble spikes, which is consistent with what we expect from ECG signals.  Interestingly, we also note that the waveforms appear to be largely consistent across different choices of $q$. 

\textbf{Comparison with other methods.} We compare the learned generators with those obtained using other methods.  First, we apply Convolutional DL (Section \ref{sec:example_conv}) to learn a single generator (Figure \ref{fig:ECG_othermethods}, top left).  In this instance, we apply $20$ iterations, and we pick $\lambda = 0.1$ as our choice of regularization parameter.  Second, we apply an interpolated variant of Convolutional DL (Section \ref{sec:example_convinterp}) to also learn a single generator (Figure \ref{fig:ECG_othermethods}, top right).  We add one interpolation point between every integer shift so that the number of dictionary elements is double that of Convolutional DL.  In this instance, we apply $20$ iterations, and we pick $\lambda = 0.1$ as our choice of regularization parameter.  Third, we apply the same interpolated variant of Convolutional DL, but with \emph{three} additional interpolation points per shift to learn a single atom (Figure \ref{fig:ECG_othermethods}, bottom left).  Subsequently, the number of dictionary elements is quadrupled compared to the first instance with Convolutional DL.  We apply $20$ iterations, and we pick $\lambda = 0.1$ as our choice of regularization parameter.  Fourth, we apply Regular DL (Section \ref{sec:example_regular}) to learn $q=201$ dictionary atoms (Figure \ref{fig:ECG_othermethods}, bottom right).  In this instance, we apply $100$ iterations, and we pick $\lambda = 0.02$ as our choice of regularization parameter.  

For Convolutional DL and its interpolated variants, we observe that the learned generators also resemble spikes, which is consistent with the waveforms learned using a continuously shift invariant dictionary.  
For Regular DL however, we note the presence of numerous dictionary elements that do not resemble spikes.  The probable explanation for this is that Regular DL has substantially more degrees of freedom compared to the other methods we applied, and as a result picks up a substantially higher number of waveform patterns.  This suggests that, if one is specifically interested in learning essential features of a dataset, then incorporating some form of structural invariance is essential.  

We remark that while all methods incorporating shift invariance learn similar looking templates, minor differences exist between the learned templates.  For instance, the generator learned using integer shift invariance has a less pronounced peak (absolute value $\approx 0.5$) compared to the generator learned using continuously shift invariance (absolute value $> 0.8$).  In particular, the generators learned using interpolated integer shifts lie between both extremes (absolute value $\approx 0.7$ using twice as many dictionary elements, and $\approx 0.8$ using four times as many dictionary elements).  It is not entirely clear if these differences are artifacts of random initializations, or reflective of genuine differences in these methods.  One possible explanation we put forward is based on the intuition that signals with high frequency components (as is the case for spikes) can appear quite different between consecutive integer shifts.  As such, methods that do not account for a continuum of shifts needs to suppress the high frequency components so that the learned generator and its integer shifted copies ``cover'' the data well, while methods that do permit continuous shifts are not constrained in a similar way.  Conversely, if the signal is sufficiently smooth, we expect the dictionaries learned using integer shift invariant (Convolutional) DL and continuously shift invariant DL to be qualitatively identical.

\textbf{Comparison in compute time.}  In Figure \ref{fig:Iteration_time}, we compare the per iteration time across all methods.  We specifically record the time taken to solve \eqref{eq:atomicnorm_softthresholding} across all instances, and exclude the time taken for the dictionary update step.  As a note, the experiments were conducted in {\footnotesize PYTHON} on a machine fitted with an Intel Core i7-7600U running at 2.80GHz.  Our results suggest that the interpolated version of Convolutional DL is computationally more expensive than the vanilla Convolutional DL, but cheaper than learning continuously shift invariant dictionaries.  As such, the interpolated version of Convolutional DL as well as the more general framework in \cite{SFB:19} may be preferred to learning continuously shift invariant dictionaries if the number of interpolants required is modest.

\begin{figure}
	\centering
	\includegraphics[width=0.60\textwidth]{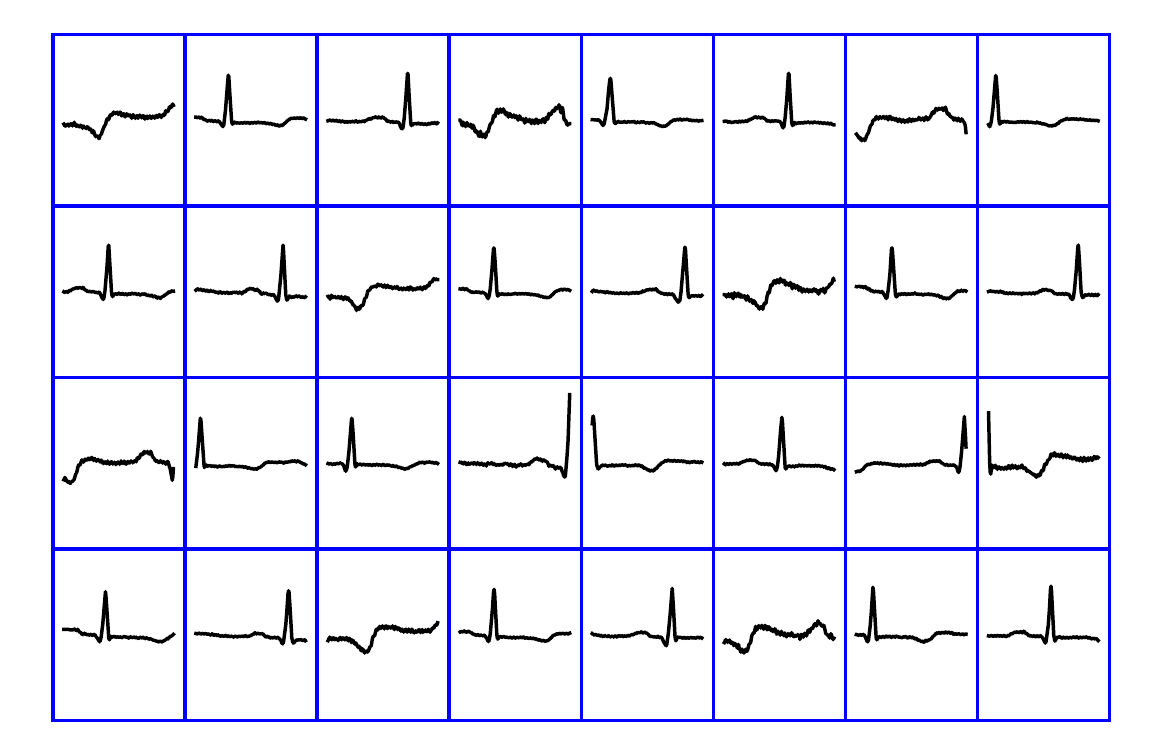}
	\caption{Samples of our dataset.  Each datapoint is a time series of length $101$ segmented from a longer time series.}
	\label{fig:SampleECGa}
\end{figure}

\begin{figure}
	\centering
	\includegraphics[width=0.6\textwidth]{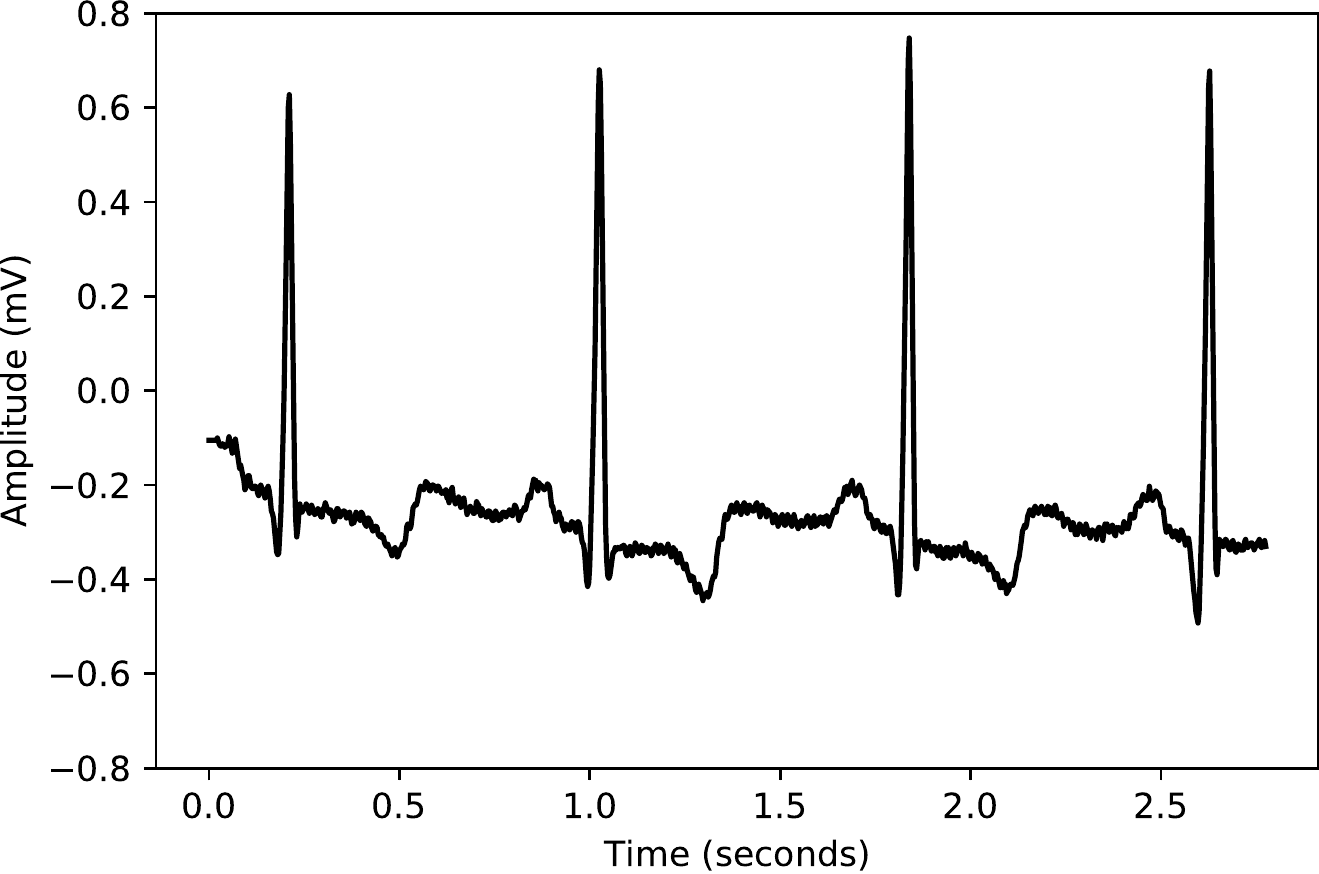}
	\caption{A segment of the longer ECG time series.}
	\label{fig:SampleECGb}
\end{figure}

\begin{figure}
	\centering
	\includegraphics[width=0.45\textwidth]{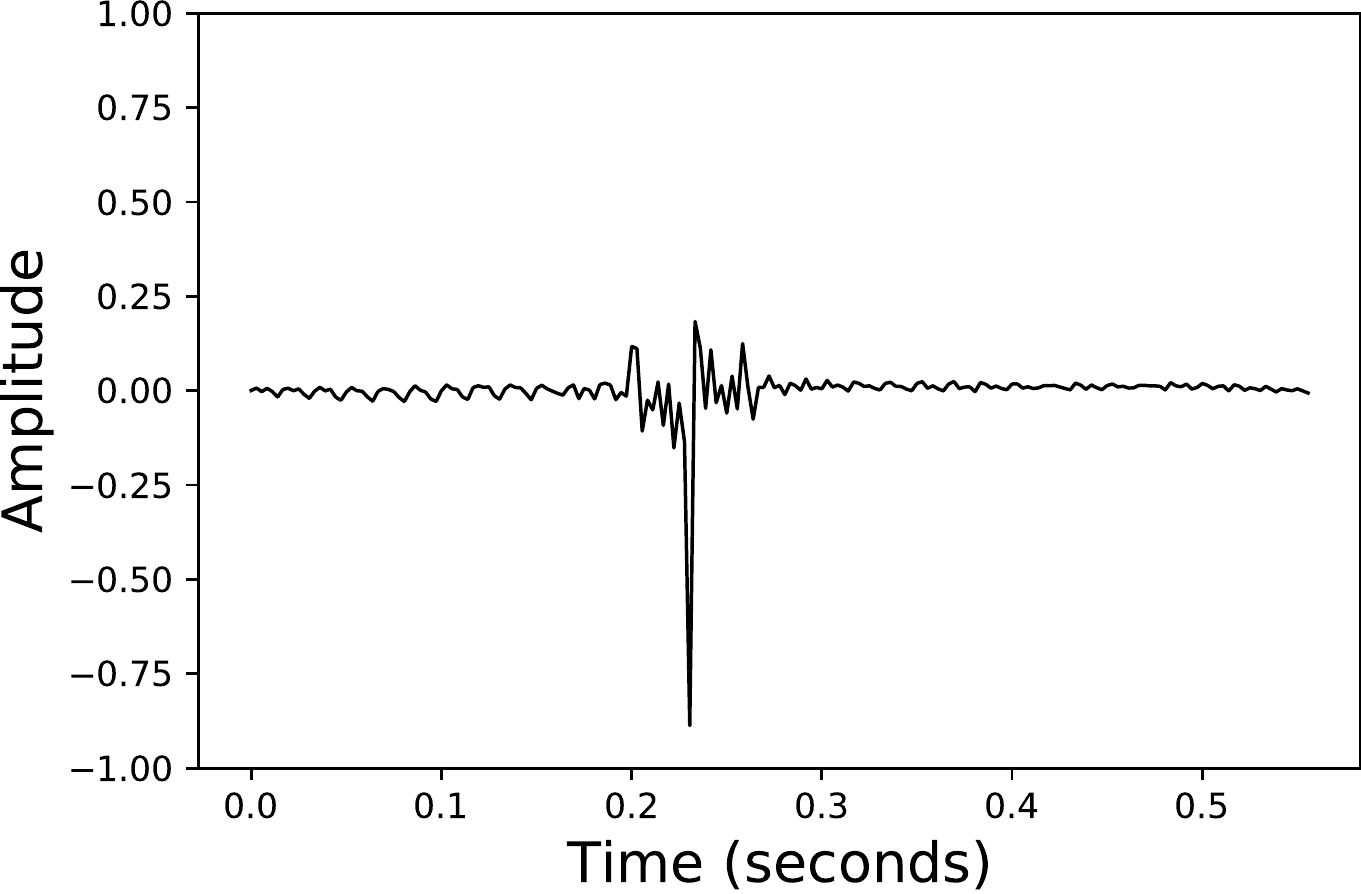}
	\includegraphics[width=0.45\textwidth]{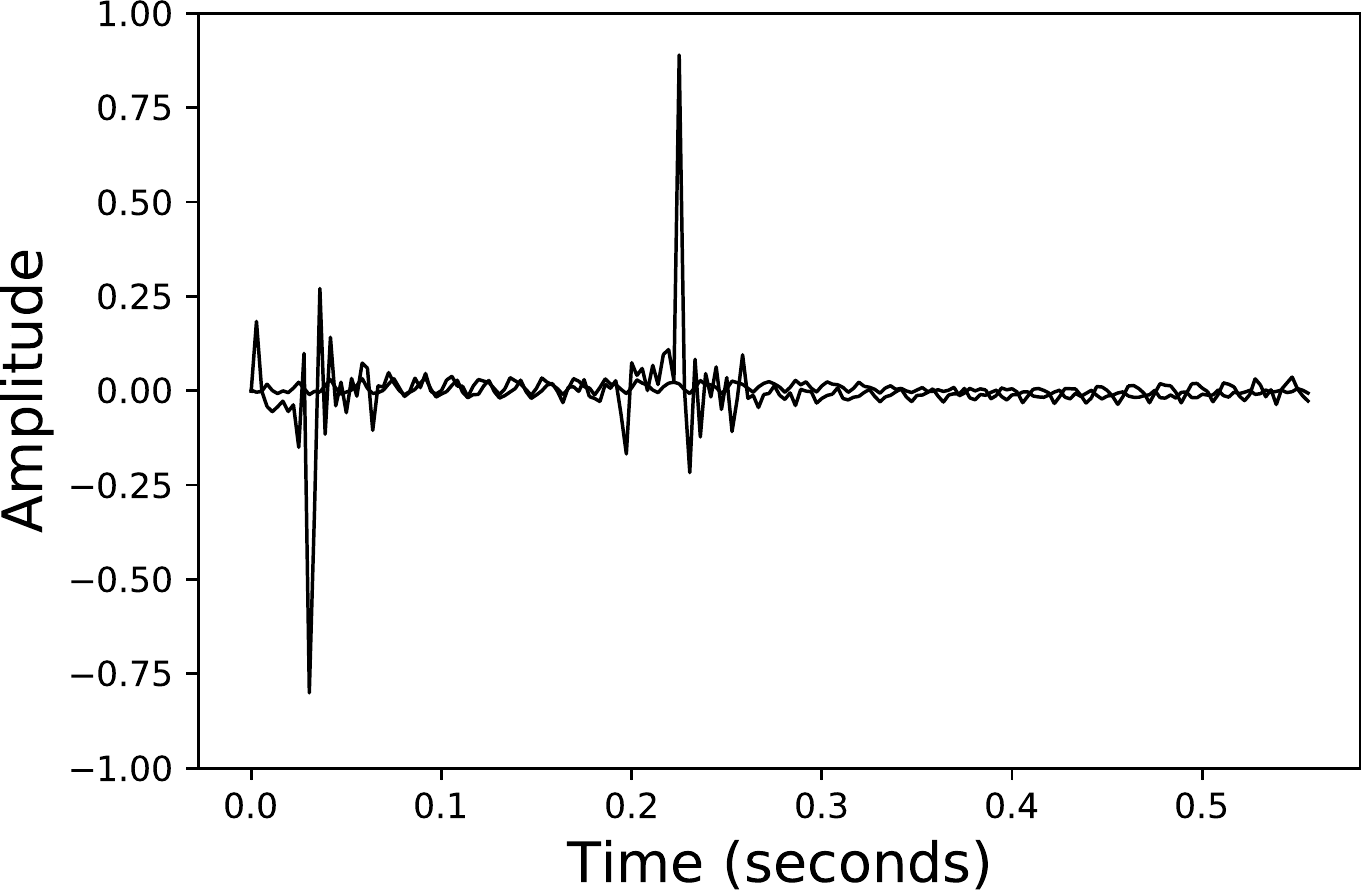}
	\includegraphics[width=0.45\textwidth]{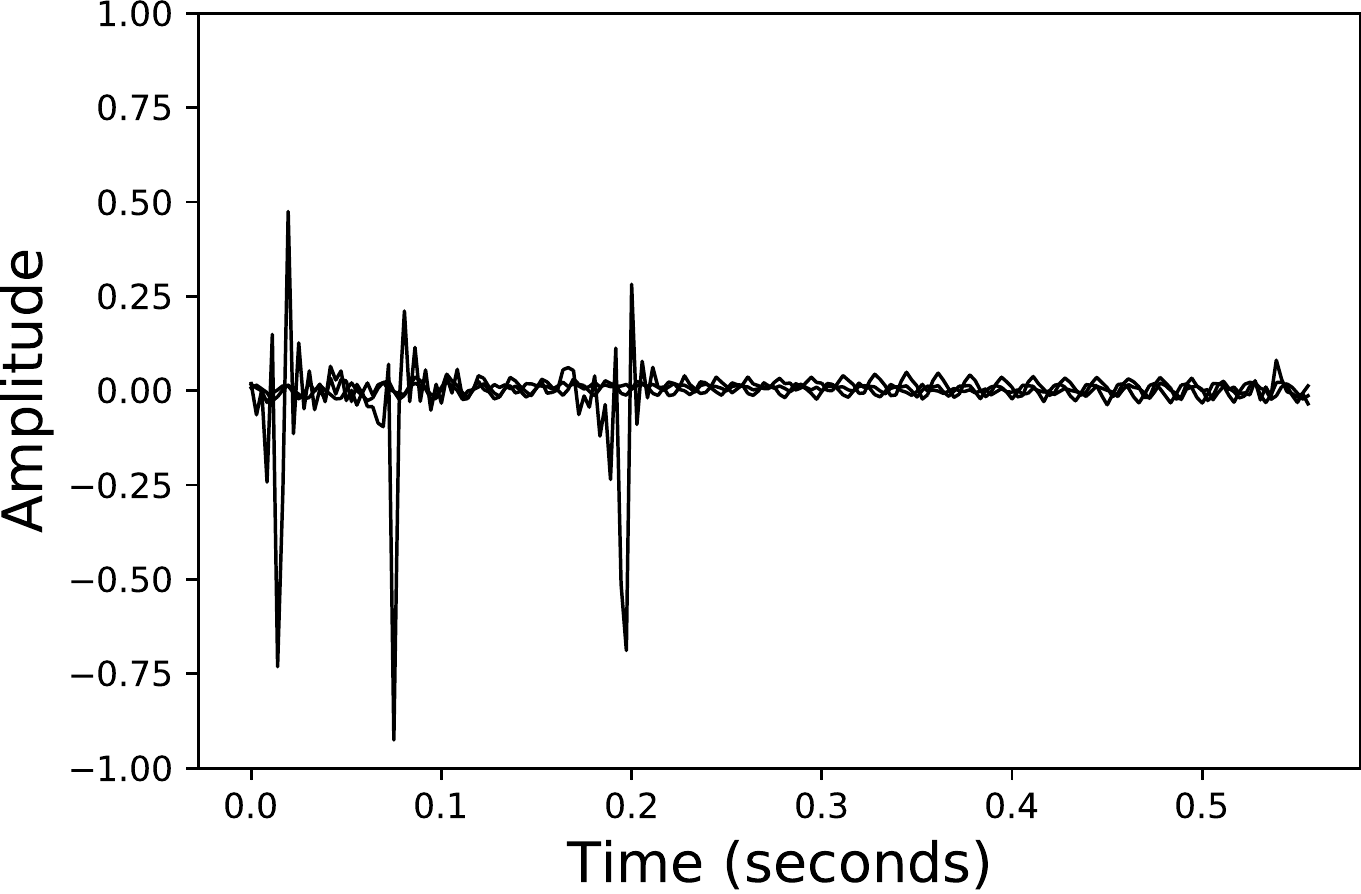}
	\includegraphics[width=0.45\textwidth]{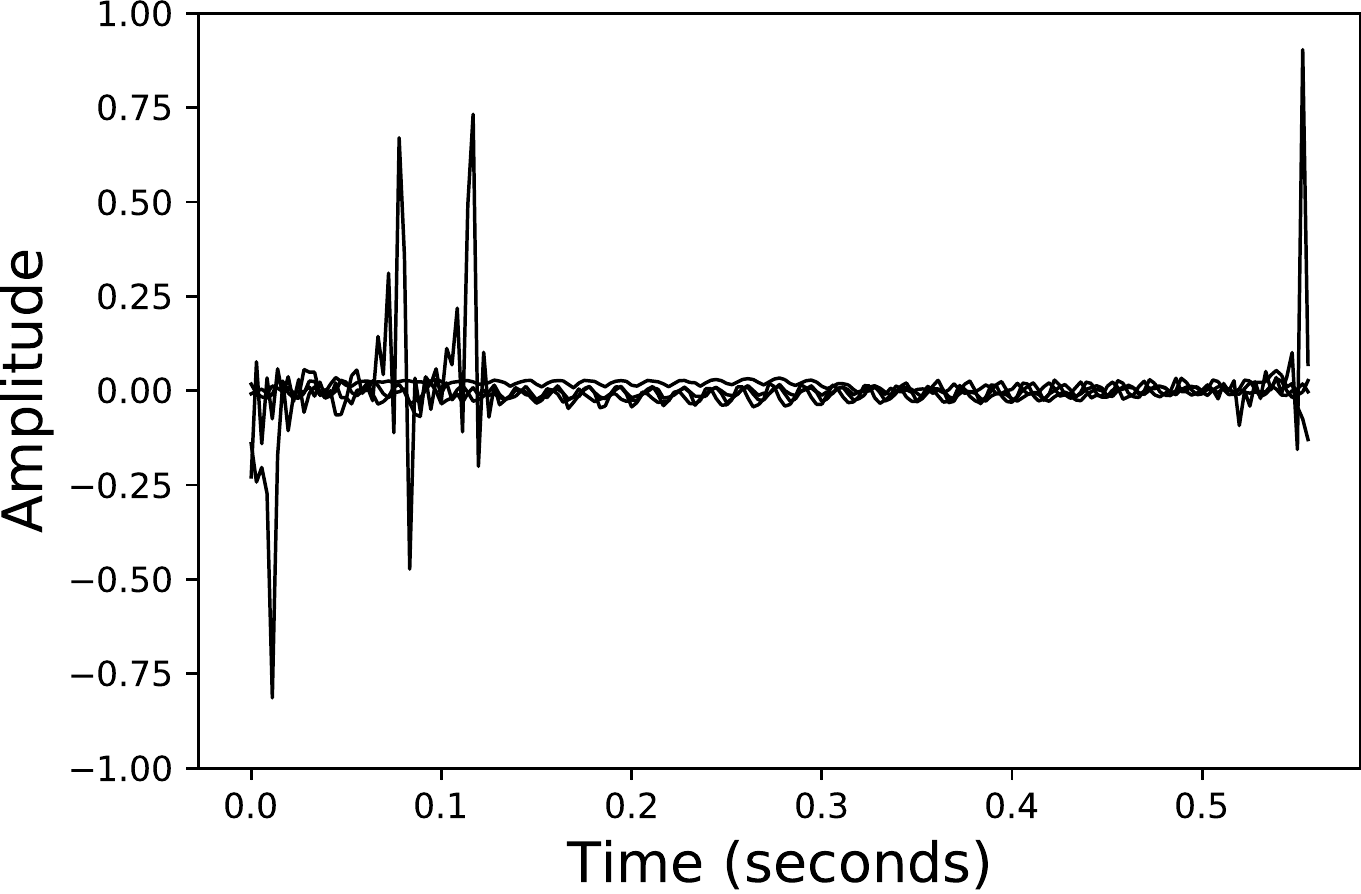}
	\caption{Generators of a continuously shift invariant dictionary learned from ECG data.  We specify as input $q=1$ (top left), $2$ (top right), $3$ (bottom left), $4$ (bottom right) number of generators.}
	\label{fig:CtsShift_ECG}
\end{figure}

\begin{figure}
	\centering
	\includegraphics[width=0.60\textwidth]{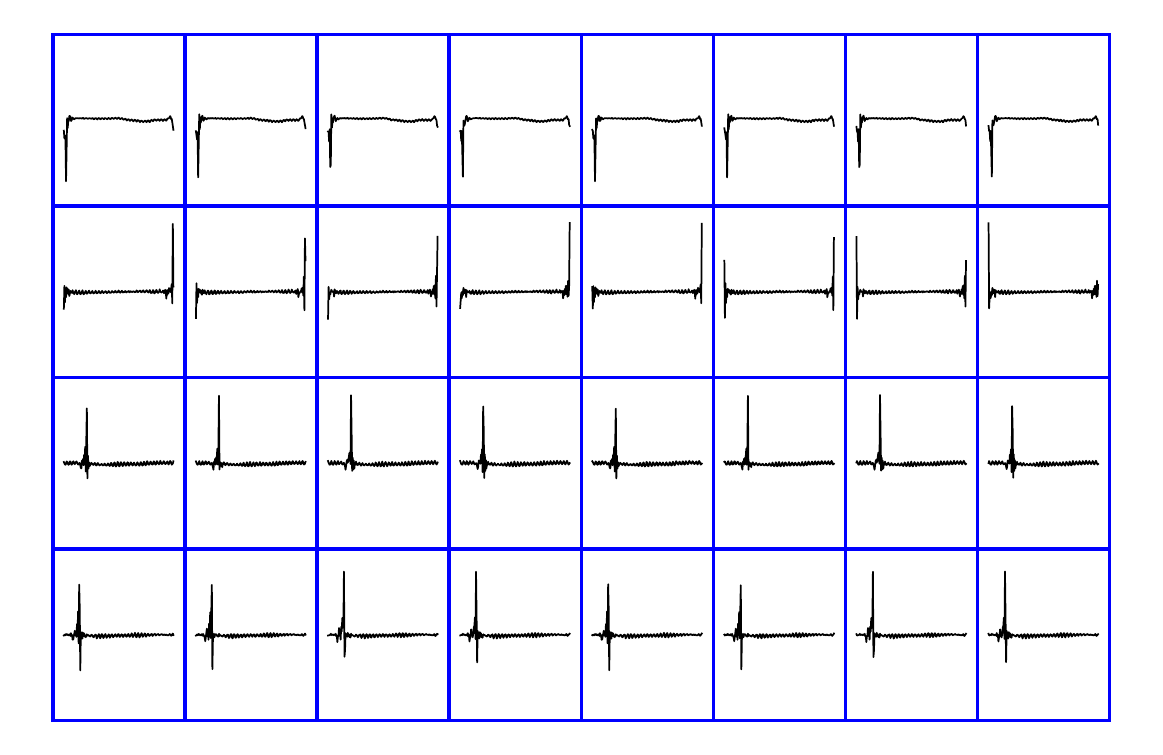}
	\caption{Continuous shift of learned generators.  Each row represents a single generator, and span a single integer coordinate shift.}
	\label{fig:CtsShift_Range}
\end{figure}

\begin{figure}
	\centering
	\includegraphics[width=0.45\textwidth]{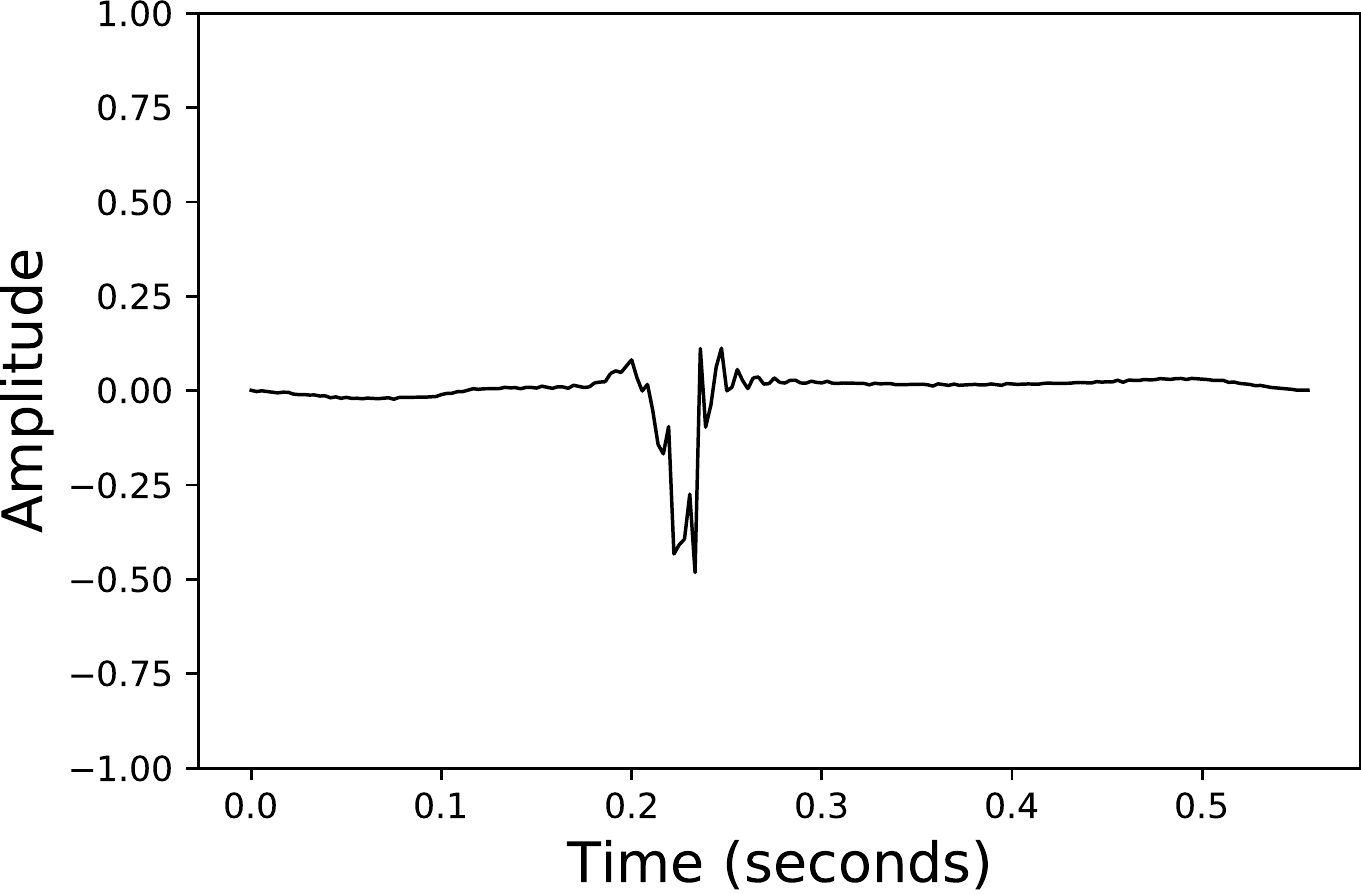}
	\includegraphics[width=0.45\textwidth]{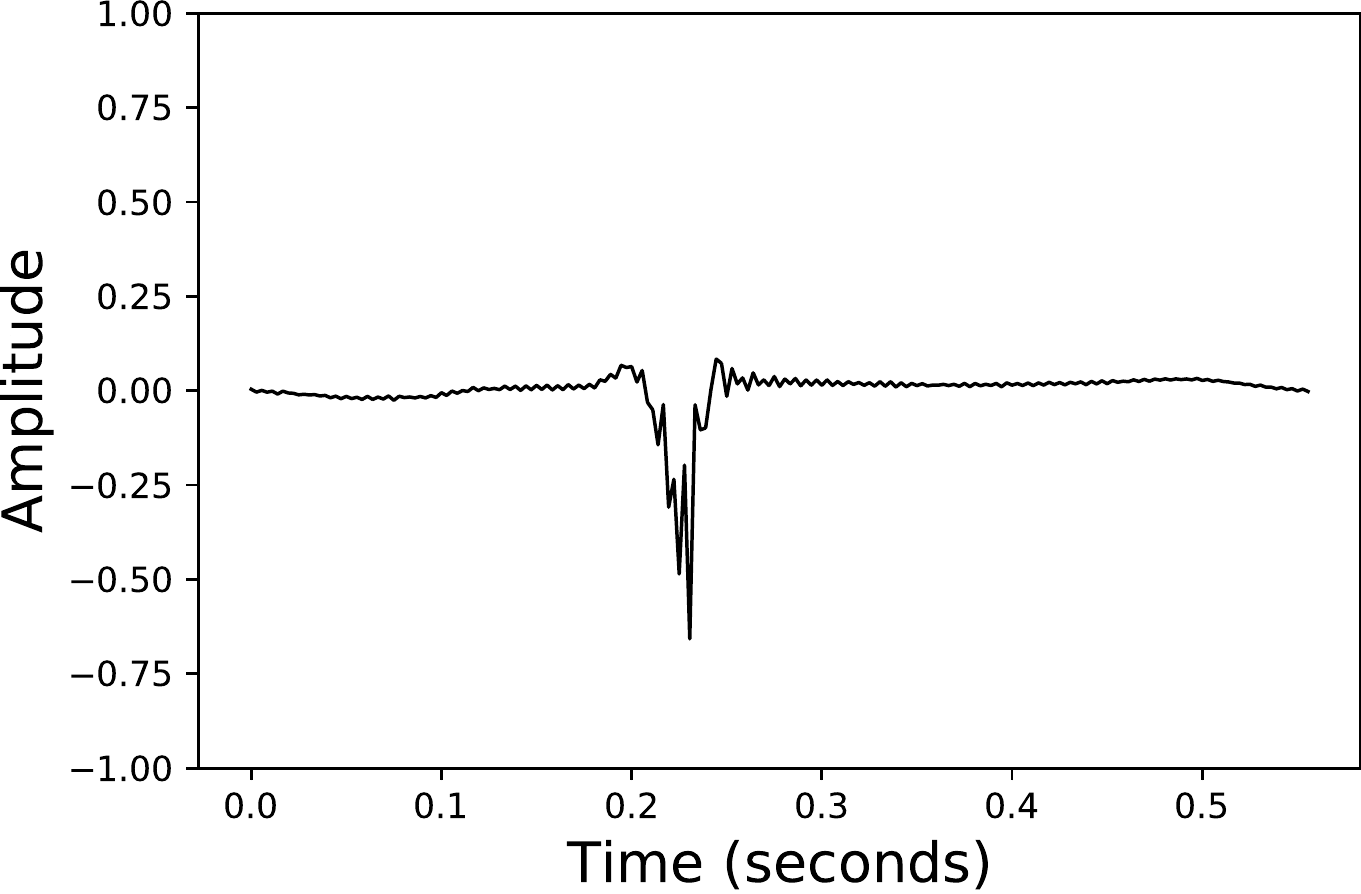}
	\includegraphics[width=0.45\textwidth]{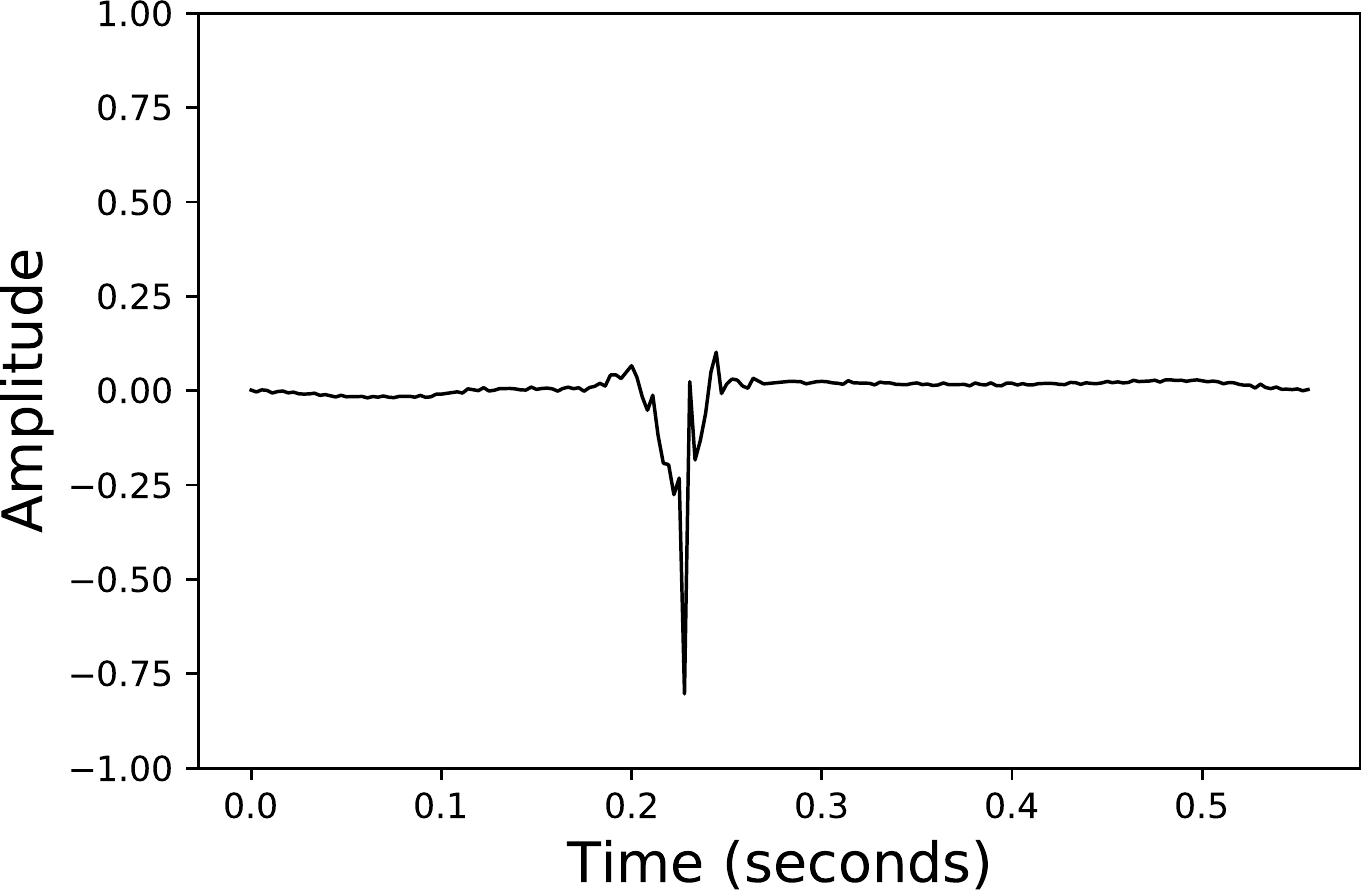}
	\includegraphics[width=0.45\textwidth]{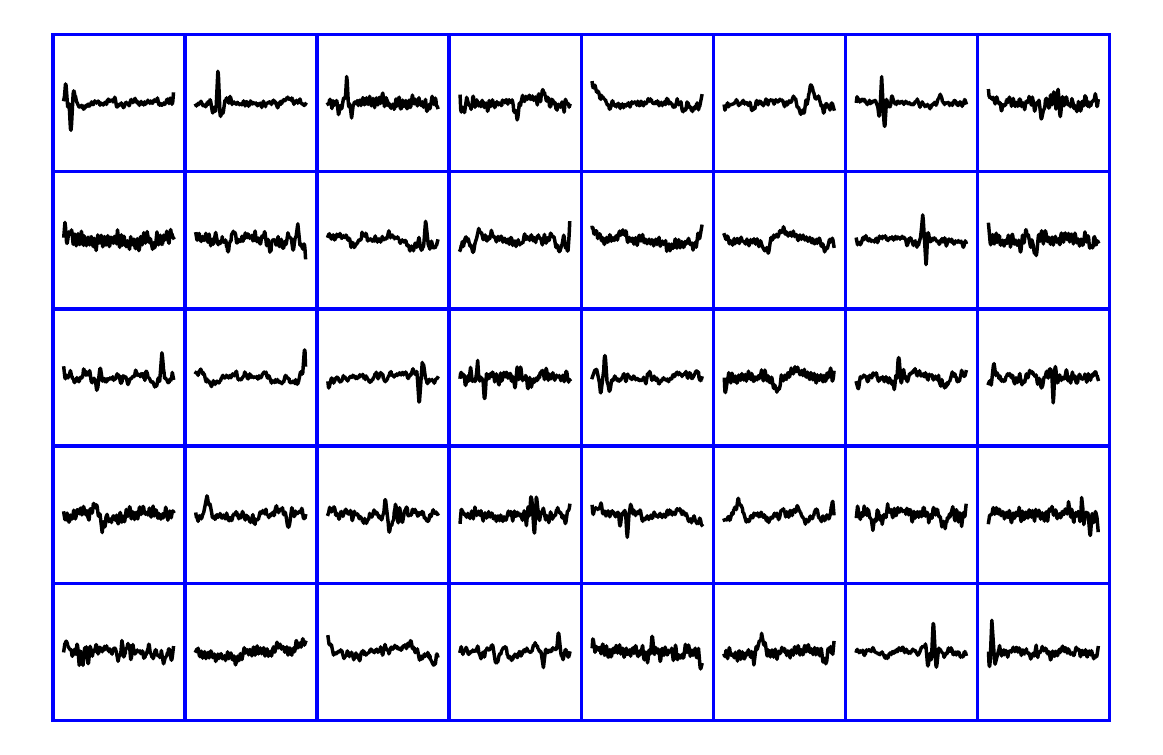}
	\caption{Generator of an integer shift invariant dictionary (top left), interpolated integer shift invariant dictionary ($2\times$ dict. elements) (top right), interpolated integer shift invariant dictionary ($4\times$ dict. elements) (bottom left), and generators from regular dictionary learning (bottom right) learned from ECG data.}
	\label{fig:ECG_othermethods}
\end{figure}

\begin{figure}
\centering
\begin{tabular}{|c|c|}
\hline 
Approach & Avg. per iteration time \\
\hline 
Cts. Shift Inv. DL & 280 secs \\
Convolutional DL (CDL) & 2.2 secs \\
Interpolated CDL ($2\times$ pts.) & 4.2 secs \\
Interpolated CDL ($4\times$ pts.) & 8.3 secs \\
Regular DL & 2.5 secs \\
\hline
\end{tabular}
\caption{Comparison of per iteration time.}
	\label{fig:Iteration_time}
\end{figure}
\subsection{Processing on Unseen Orientations} \label{sec:unseen}

In the following, we consider a task that highlight the utility of expressing the full range of orientations.  In this experimental set-up, our dataset comprises $n=1000$ time series of length $31$.  These signals are segmented from the same ECG signal as in the experimental set-up in Section \ref{sec:ecg_ctsshift} -- the difference is that the signal is sampled at $36$Hz, and the time series only attain a maximum in the first $10$ coordinates (see Figure \ref{fig:LHSdata} for a subset of the data).  Stated simply, the dataset is constructed such that we do not observe the full spectrum of shifts in our data.

\begin{figure}
	\centering
	\includegraphics[width=0.45\textwidth]{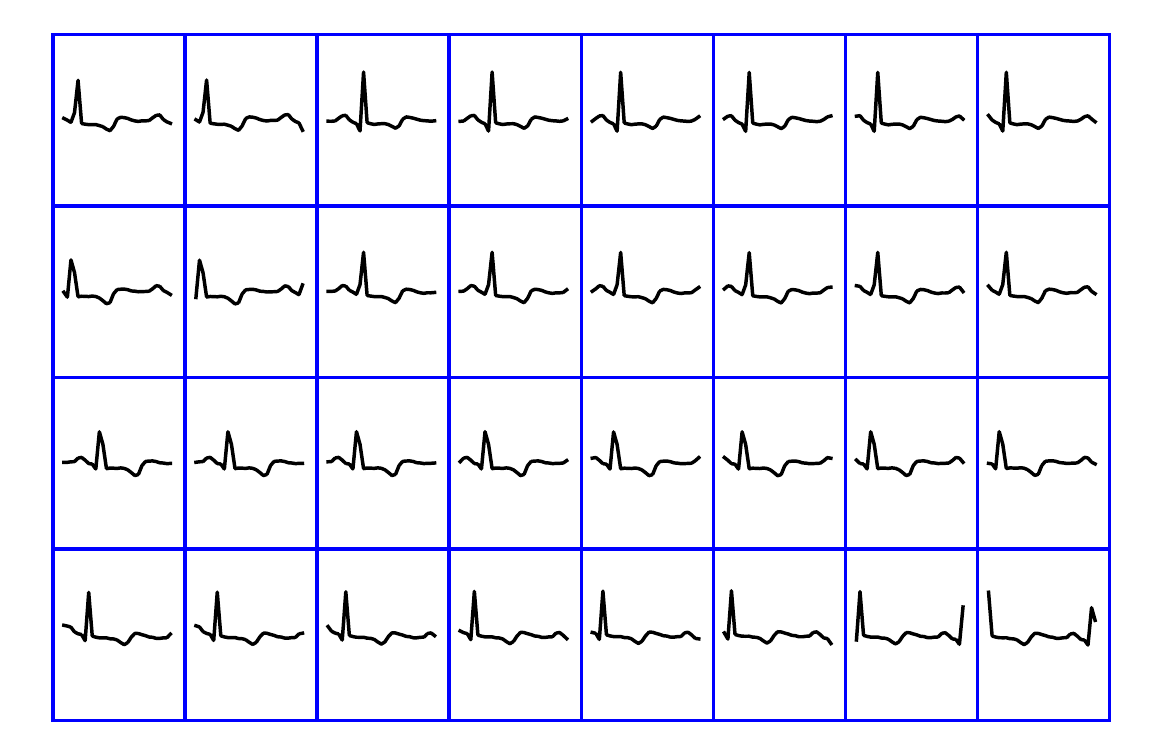}
	\includegraphics[width=0.45\textwidth]{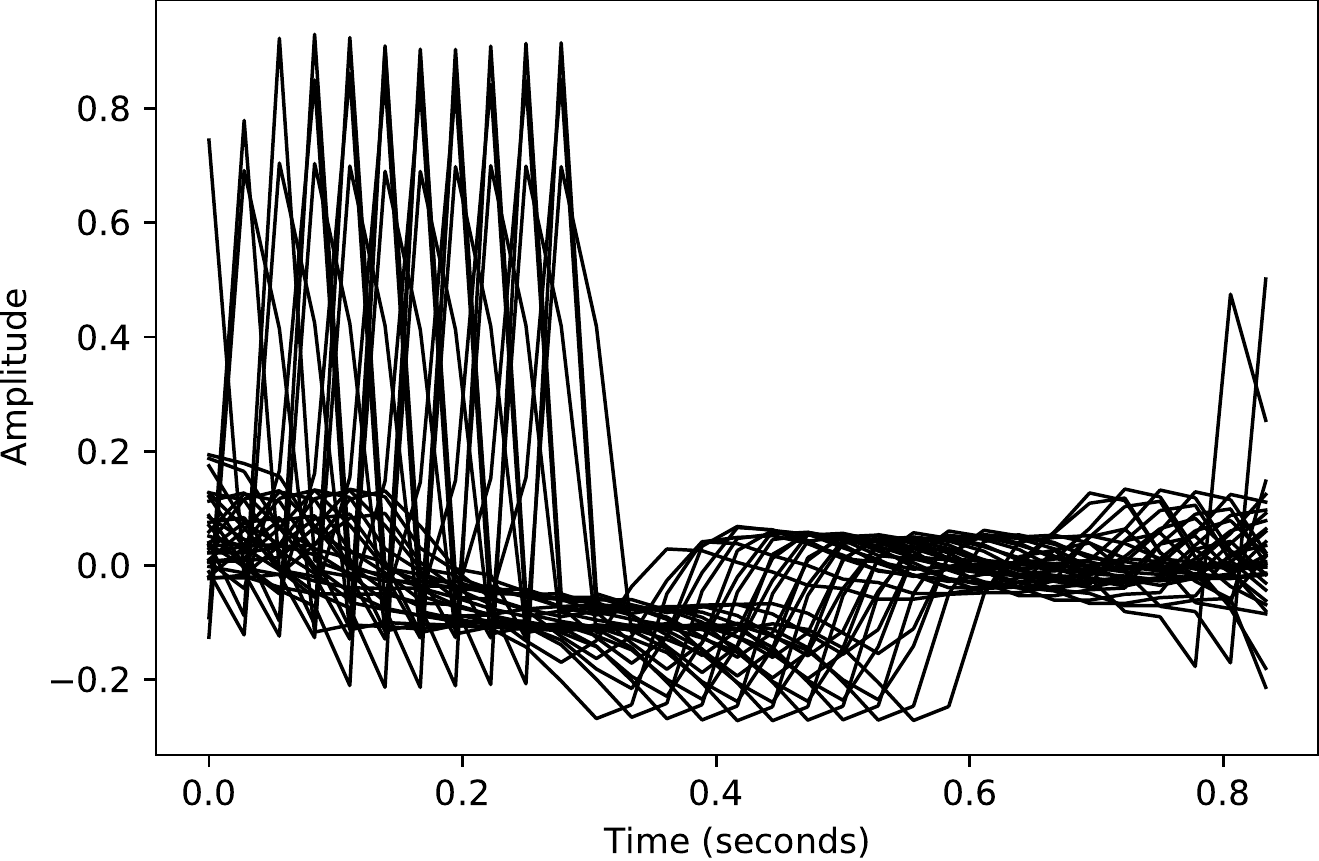}
	\caption{Subset of the dataset presented separately (left sub-plot) and superimposed on the same plot (right sub-plot).}
	\label{fig:LHSdata}
\end{figure}

Our first remark is that regular dictionary learning, when applied to the dataset, does not learn atoms that capture shifts of the data beyond those observed in the dataset.  Figure \ref{fig:CDL_LHSatoms} shows the output by applying regular dictionary learning to learn a dictionary comprising $31$ atoms.  In contrast, our method when applied to the dataset with the choice of a single generator learns a waveform that captures the signal (see Figure \ref{fig:GIDL_LHSatoms}).  The unbalanced nature of the data in the sense that only a fraction of the full spectrum of orientations is represented in the data poses no difficulty to our framework.

\begin{figure}
	\centering
	\includegraphics[width=0.45\textwidth]{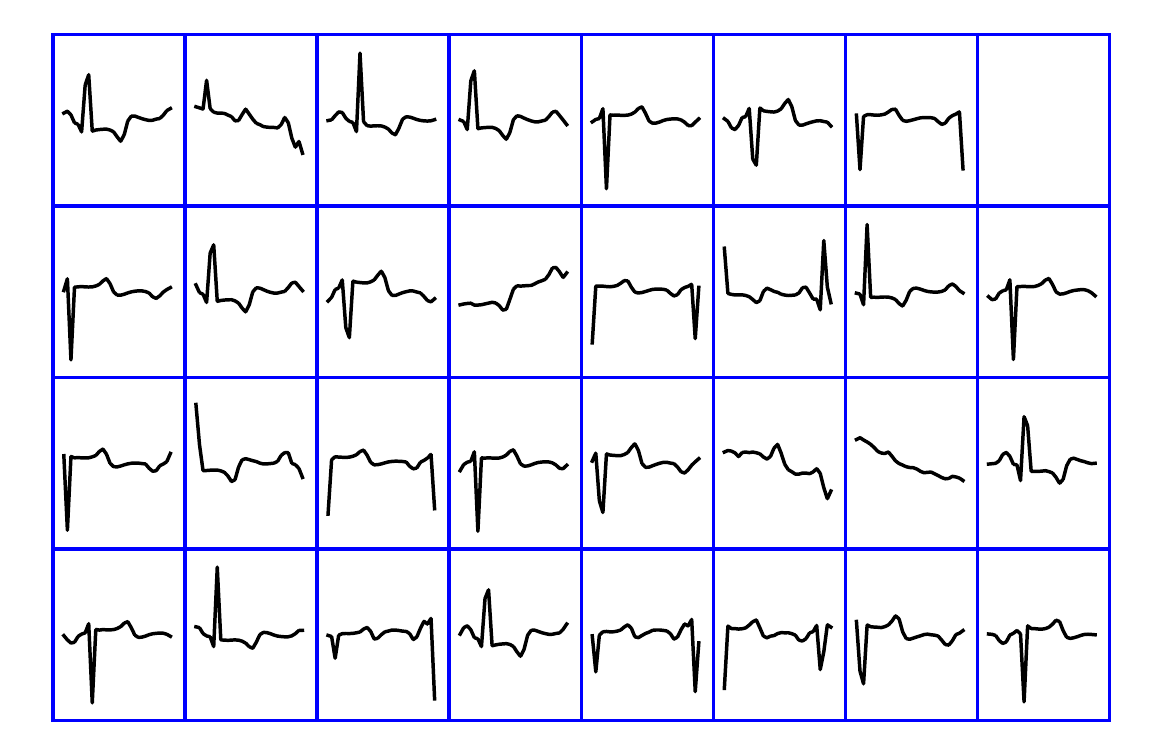}
	\includegraphics[width=0.45\textwidth]{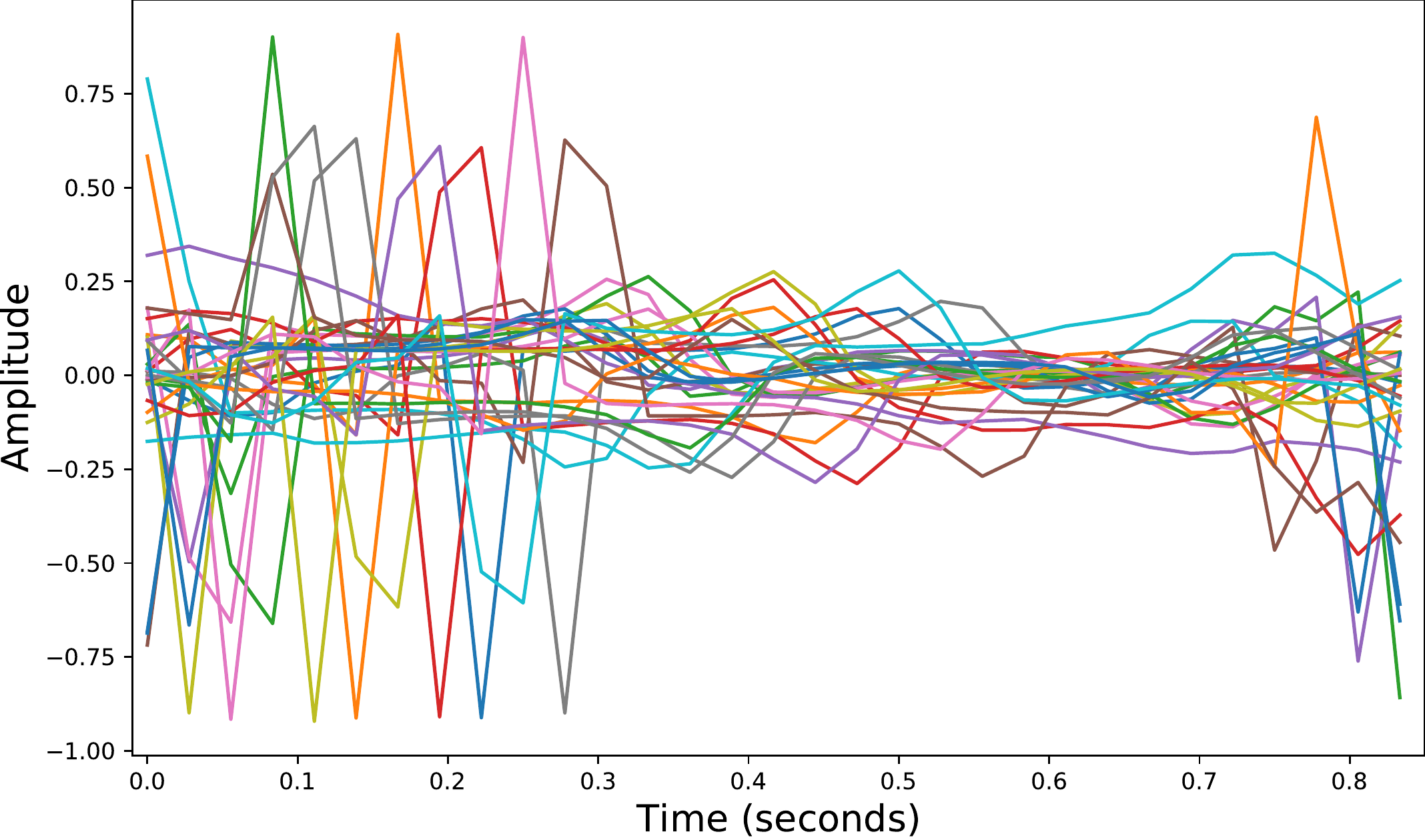}
	\caption{Atoms learned from regular dictionary learning presented separately (left sub-plot) and superimposed on the same plot (right sub-plot).}
	\label{fig:CDL_LHSatoms}
\end{figure}

\begin{figure}
	\centering
	\includegraphics[width=0.5\textwidth]{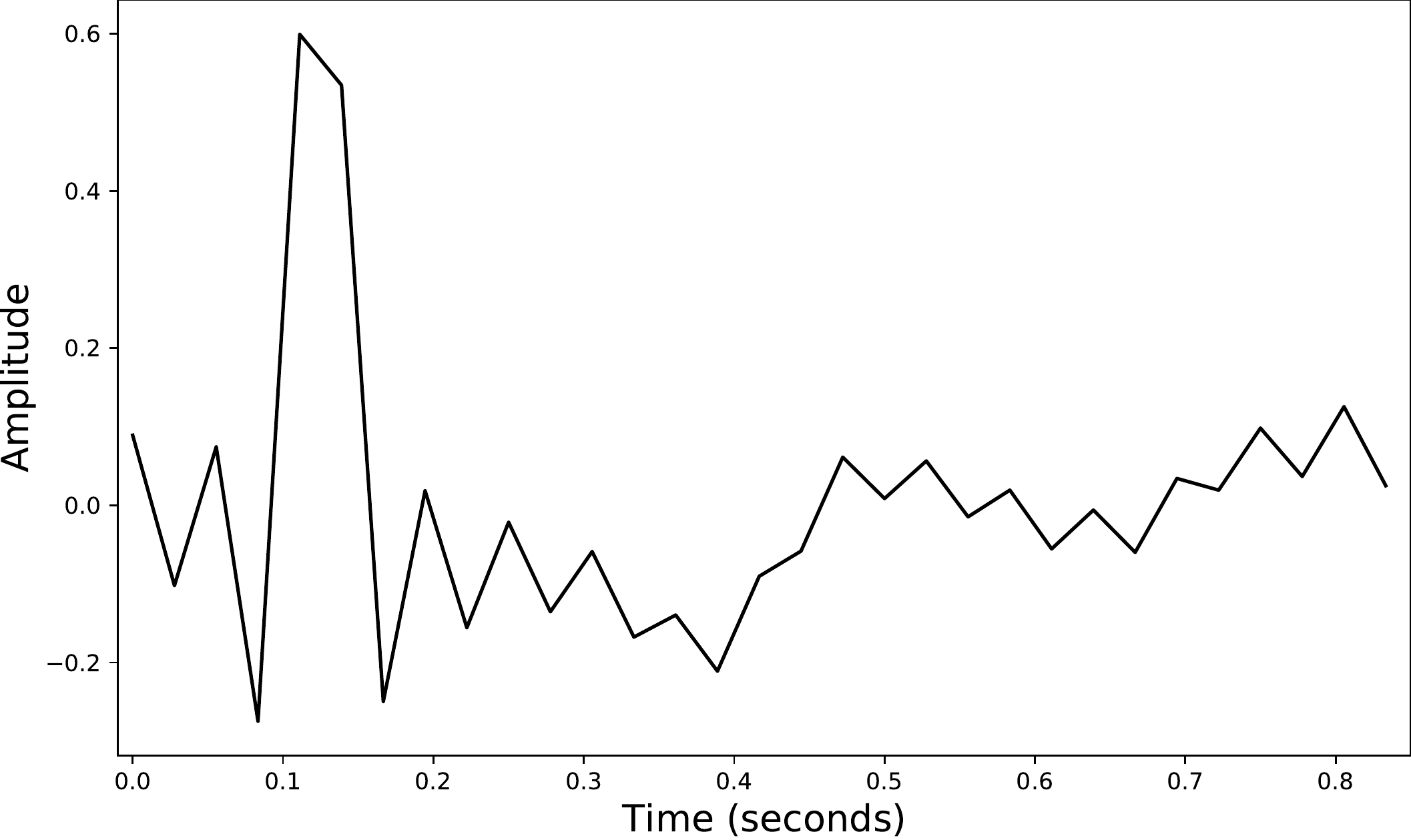}
	\caption{Atom learned from our framework.}
	\label{fig:GIDL_LHSatoms}
\end{figure}

The incorporation of invariance priors becomes particularly useful when we wish to use the learned atoms for processing orientations of data not observed in the training set.  Consider the following instance in which we observe a segment of the waveform -- see left sub-plot of Figure \ref{fig:completion}, and we wish to complete the missing entries.  We do so by seeking the vector that minimizes the norm induced by the learned atoms:
\begin{equation} \label{eq:cvx_completion}
\by_{\mathrm{opt}} ~ \in ~ \underset{\bx}{\arg \min} ~~ \| \bx \| \quad \text{s.t.} \quad P_{\mathrm{obs.}} (\bx - \by_{\mathrm{data}}) = 0.
\end{equation}
Subsequently, the vector $\by_{\mathrm{opt}}$ is the solution of a convex program.  In our numerical experiments, we compute $\by_{\mathrm{opt}}$ using a CVXPY \cite{agrawal2018rewriting,diamond2016cvxpy} implementation of CVXOPT \cite{CVXOPT}.

Figure \ref{fig:completion} shows an example of the completed signal by instantiating $\| \bx \|$ in the above using the norms induced by atoms learned using regular dictionary learning and our method.  We repeat \eqref{eq:cvx_completion} over $100$ different time series $\by_{\mathrm{data}}^{(i)}$, $1 \leq i \leq 100$.  The average squared error loss using our method $\frac{1}{100} \sum_{i=1}^{100} \| \by_{\mathrm{opt}} - \by_{\mathrm{data}} \|_2^2 / \|\by_{\mathrm{data}}\|^2_2 $ is $ 0.079$ using our method, and is $350$ using regular dictionary learning.

\begin{figure}
	\centering
	\includegraphics[width=0.45\textwidth]{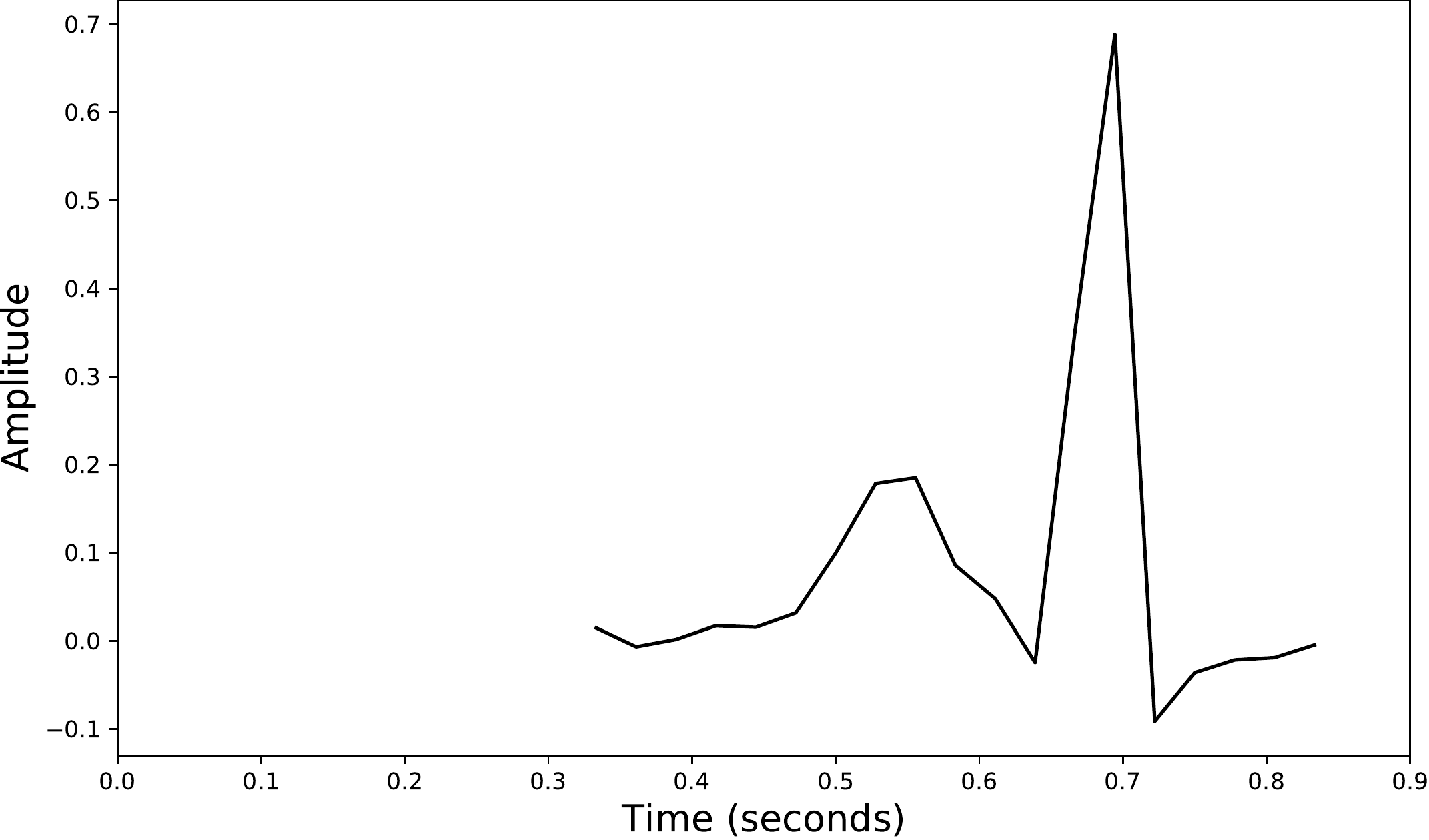}
	\includegraphics[width=0.45\textwidth]{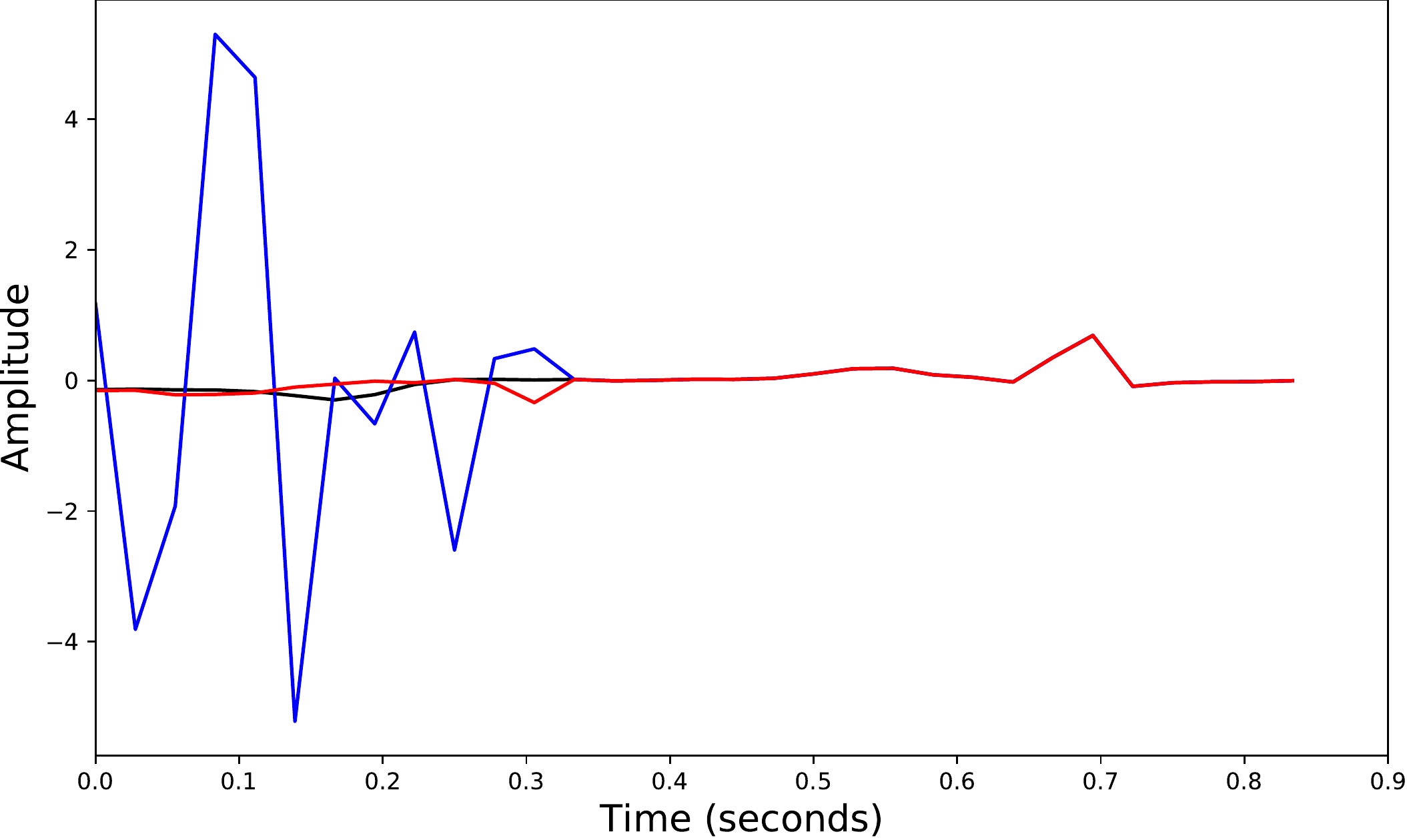}
	\includegraphics[width=0.45\textwidth]{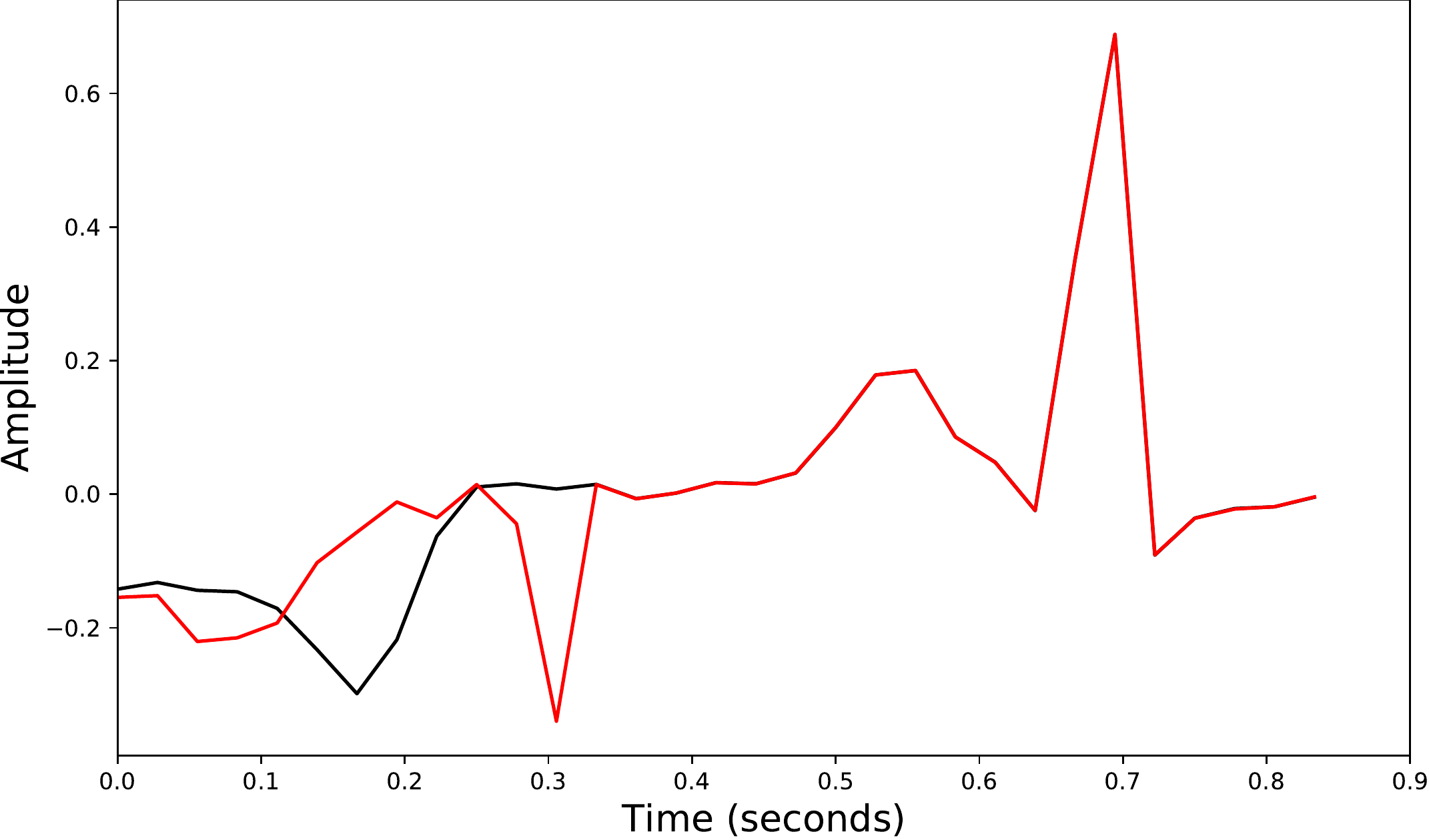}
	\caption{Completing a partially observed ECG signal (top left) using regular dictionary learning (in blue, top right), and our framework (in red, bottom).}
	\label{fig:completion}
\end{figure}
\subsection{Dictionary Learning for Synchronization}

We consider a stylized numerical experiment motivated by applications in synchronization problems.  Suppose we have a collection of $r$ devices that are known to be synchronized relative to each other, but the common phase of these devices is unknown.  Our goal is to estimate the phase given noisy measurements of these objects.

Concretely, consider \emph{matrices} of the form $\{Y^{(i)}\}_{i=1}^{n} \subset \mathbb{R}^{d \times r}$, where $Y^{(i)} = G^{(i)} A^{\star} + E^{(i)}$ is the product of an unknown \emph{orthogonal} matrix $G^{(i)} \in O(d)$ and an unknown common linear map $A^{\star}$, corrupted by noise $E^{(i)}$.  We can view the columns of $Y^{(i)}$ as the noisy measurements of $r$ different devices.  The columns of $A^{\star}$ captures the phase of each device relative to other devices, while the matrix $G^{(i)}$ captures a phase shift that is common to all objects, but specific to an observation instance $i$.

In our first experiment, we generate $n=1000$ data points with the choice of $d=3$ and $r=20$.  We generate $A^{\star}$ from the normal distribution, and we normalize the columns of $A^{\star}$ to be one.  The orthogonal matrices $G^{(i)}$ are drawn from the uniform measure over $O(n)$, and the noise $E^{(i)}$ is drawn from the normal distribution $E^{(i)} \sim \mathcal{N}(0,\sigma I)$, with $\sigma = 0.1$. 

We apply our framework in Section \ref{sec:example_orthogonal} (with the choice of a single generator) to estimate the linear map $A^{\star}$ (we also normalize the columns of each iterate to be unit-norm).  In the left sub-plot of Figure \ref{fig:Sync_DL}, we show the difference between the iterates and the ground truth $A^{\star}$ using the distance measure \eqref{eq:dictdistance}.  Our results show that our method recovers the underlying $A^{\star}$ in all $10$ random initializations.  Note that to compute the distance measure \eqref{eq:dictdistance}, it is necessary to solve an optimization instance of the following form as a sub-routine
\begin{equation*}
\underset{Q \in O(n)}{\arg \min} ~ \| A_1 - Q A_2 \|_F^2.
\end{equation*}
By eliminating the constant terms, this is equivalent to the following
\begin{equation*}
\underset{Q \in O(n)}{\arg \max} ~ \mathrm{trace}(A_2 A_1^{\intercal} Q ).
\end{equation*}
We compute the solution to the above by first computing the SVD $U \Sigma V^{\intercal} = A_1 A_2^{\intercal}$, and setting $Q = UV^{\intercal}$.

\begin{figure}
	\centering
	\includegraphics[width=0.46\textwidth]{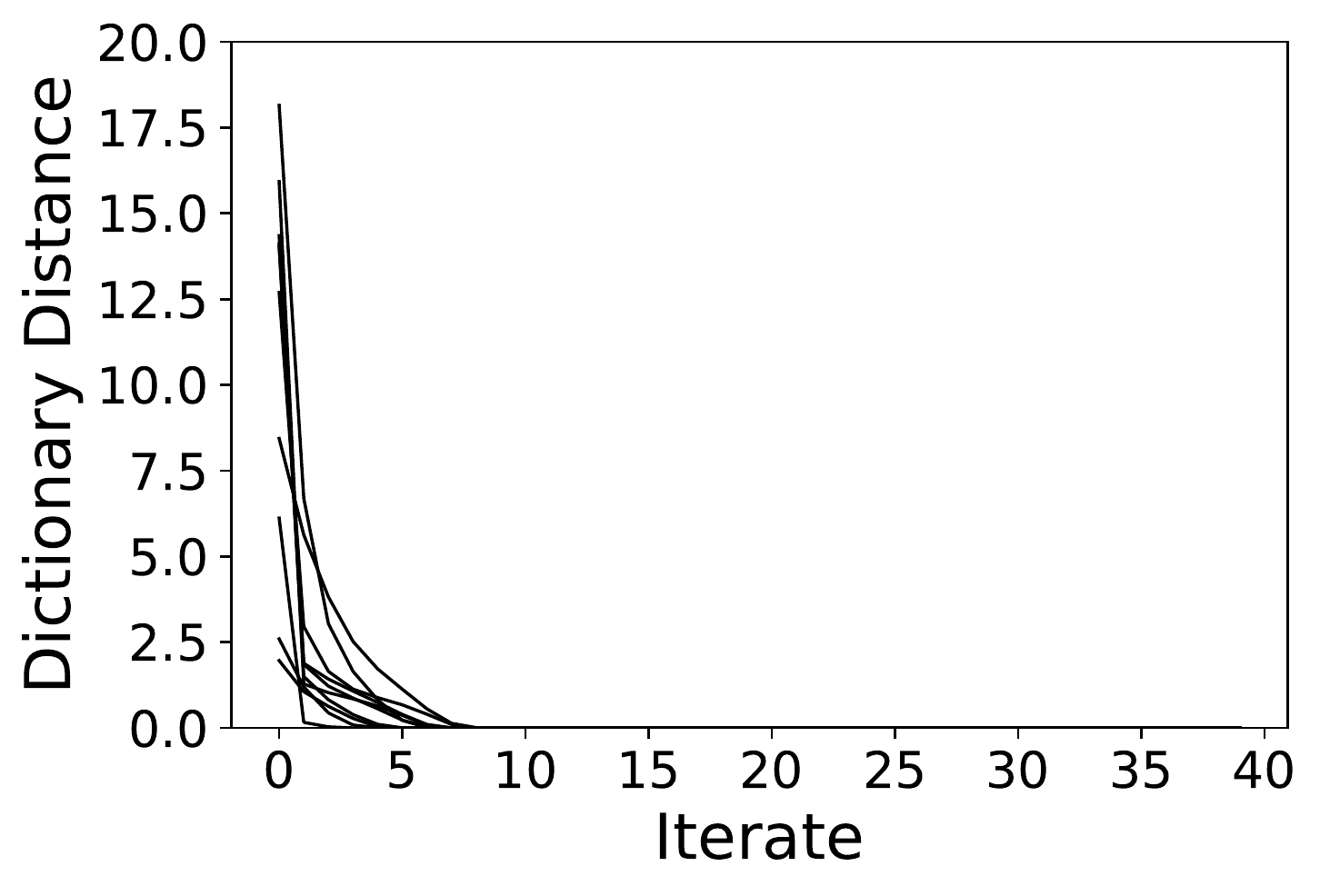}
	\includegraphics[width=0.45\textwidth]{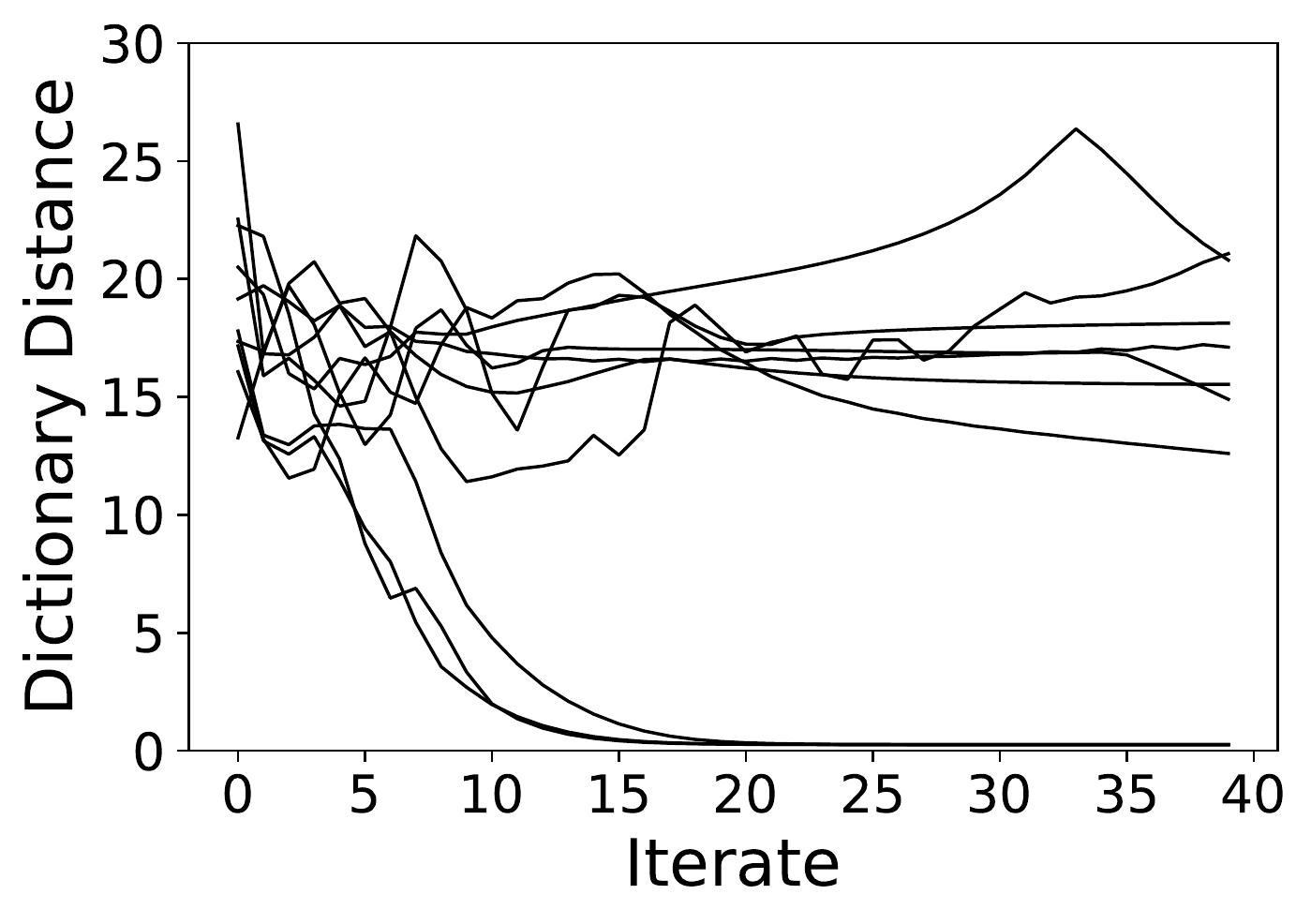}
	\caption{DL invariant to orthogonal transformations.  The graphs show the distance of each iterate from the ground truth dictionary.  The data is synthetically generated by one generator in the left sub-plot and two generators in the right sub-plot.}
	\label{fig:Sync_DL}
\end{figure}

In our second experiment, the signals $Y^{(i)} = \sum_j^{q} c^{(i)}_j G^{(i)}_j A^{\star}_j + E^{(i)}$ are \emph{linear sums} of rotated $A^{\star}_j$'s.  Although the task of recovering the generators $\{A^{\star}_j\}_{j=1}^{q}$ bears less resemblance to synchronization applications, it can be solved using our framework in Section \ref{sec:example_orthogonal}.  We generate $n=1000$ data matrices with the same choice of parameters $d=3$, $r=20$, and $\sigma = 0.1$.  Here, the additional scaling variable $c^{(i)}_j \sim \mathcal{N}(0,1)$ is standard normal.  We show the corresponding set of results in the right sub-plot of Figure \ref{fig:Sync_DL}.  In this instance, our method is less successful because of the increase in number of unknown variables; our method appears to succeed only in $3$ out of $10$ random initializations.

In summary, this experiment describes a concrete instance where the type of symmetry (namely the orthogonal group) necessitates the flexibility that our framework provides: With continuous shifts, one could apply discrete analogs as an approximation whereas with orthogonal rotations, there is no obvious discrete analog.  From a computational perspective, the instantiation of our framework for this example is a relative simple procedure in that it only requires computing the SVDs of certain matrices as a computational primitive. 

\section{Conclusion and Future Directions} \label{sec:conc}

In this paper, we develop a framework for representing data as sparse linear sums drawn from a collection of basic building blocks subject to the constraint that the collection respects a pre-specified symmetry.  Our results show that the incorporation of such symmetries as priors is most useful when dataset has few data-points, and when the full spectrum of symmetries is inadequately represented in the dataset.  

In the following, we discuss some future directions based on our work.

\textbf{Misspecification, subgroups, and convex relaxations.}  Our first future direction concerns studying the robustness of our framework to misspecification of the transformation group.  One instance in which misspecification arises is when we wish to specify a sub-group rather than the full spectrum of symmetries in the data, e.g., discrete shifts as opposed to continuous shifts.  We may do so, for instance, out of computational considerations.

We note that the atomic norms induced by the original transformation group and the sub-group are closely related -- specifically, the level set of the atomic norm induced by the sub-group of transformations would be a convex \emph{inner approximation} of the level set of the original atomic norm.   As such, we have a strong reason to expect that atoms learned using both sets of transformation groups will be closely related.  In fact, convex inner (and outer) approximations are prominently used to develop more tractable approximations of convex programs.  For such reasons, it would be useful to understand how the learned atoms using both sets of approaches differ.

\textbf{Specifying subsets of the full invariance group and the associated computational and statistical tradeoffs.}  One interesting conceptual question that warrants deeper investigation: Given a processing task involving data that possesses some form of invariance, is it always advantageous to incorporate the full range of symmetries into the learning task?  Our numerical experiments in Sections \ref{sec:shiftinvariantdiscrete} and \ref{sec:unseen} emphasize the importance of incorporating some basic level of invariances.  On the other hand, our experiments in Section \ref{sec:shiftinvariantdiscrete} also suggest that it may not be necessary to incorporate the full extent of symmetries.  In fact, it may very well be possible to learn representations of comparable quality by only incorporating some of the symmetries to those learned by incorporating the full spectrum of symmetries.  This insight may be useful from a computational perspective because the former class of methods may be cheaper.  It would be useful, from a conceptual as well as a practical perspective, to understand the computational and statistical tradeoffs associated with specifying only part of the full range of symmetries in a generic representation learning task.

\textbf{Applications to neural networks.}  In the Introduction, we discussed a closely related body of work extending CNNs to more general invariances \cite{WGTB:17,GD:14,CW:16}.  As we noted, the proposed neural network architectures do not incorporate sparse structure whereas learning sparse representations is central to our set-up.  There are certain applications where incorporating sparse priors within the neural network architecture is beneficial, and it would be interesting to see if the framework we provide via atomic norms suggests natural analogs for neural networks. 

\textbf{Generic recipes for obtaining tractable descriptions of the convex hull of matrix groups.}  An important future direction is to develop a procedure for obtaining tractable descriptions of convex hulls for a broad collection of matrix groups.  This procedure will allow us to learn dictionaries that are invariant to larger collections of symmetries.

A class of objects that has been widely studied in convex algebraic geometry and optimization concerns, and is closely related to the above problem is the collection of \emph{orbitopes} \cite{SSS:11}.  An orbitope is defined to be the convex hull of the orbit of a compact algebraic group acting linearly on a vector space \cite{SSS:11}.  The level set of the atomic norm we apply in our framework is therefore an orbitope, provided the matrix group describing the symmetries is compact algebraic.  Furthermore, the matrix group elements can be viewed as orbits of the identity matrix $I$, and hence the convex hull of these elements is also an orbitope.  The algebraic properties, the geometric properties, and the optimization-related aspects of orbitopes are areas of active research interests -- in particular, there is a body of work that concerns providing descriptions of orbitopes via semidefinite programming \cite{SanSau:20,FSP:15,SSS:11}.  It would be useful to build upon these techniques to further understand the types of invariances that are expressible within our framework.

\section*{Acknowledgements}

The author is supported by the Agency for Science, Technology and Research (A*STAR) under its AME Programmatic Funding Scheme (Project No. A18A1b0045) and the Ministry of Education (Singapore) Academic Research Fund (Tier 1) R-146-000-329-133.  The author would like to thank Venkat Chandrasekaran for insightful discussions held during the beginning of this project, and Riley J. Murray for providing helpful directions concerning numerical experiments.  The author also wishes to thank the three reviewers for feedback that has substantially improved the paper.

\bibliography{bib_gidl}

\end{document}